\def\@ACM@badge@width{8mm}
\title{\egg: Fast and Extensible Equality Saturation}
\slshape\color{green!40!black},
\slshape\color{black!60},
\lstdefinelanguage{Rust}{
  sensitive,
  morecomment=[l]{//},
  morecomment=[s]{/*}{*/},
  moredelim=[s][{\itshape\color[rgb]{0,0,0.75}}]{\#[}{]},
  morestring=[b]{"},
  alsodigit={},
  alsoother={},
  alsoletter={!},
  otherkeywords={=>},
  morekeywords={break, continue, else, for, if, in, loop, match, return, while},
  morekeywords={as, const, let, move, mut, ref, static, unsafe},
  morekeywords={dyn, enum, fn, impl, Self, self, struct, trait, type, use, where},
  morekeywords={crate, extern, mod, pub, super},
  morekeywords={abstract, alignof, become, box, do, final, macro,
    offsetof, override, priv, proc, pure, sizeof, typeof, unsized, virtual, yield},
  morekeywords=[2]{Send},
  morekeywords=[3]{bool, char, f32, f64, i8, i16, i32, i64, isize, str, u8, u16, u32, u64, unit, usize, i128, u128},
}%
\author{Max Willsey}
\affiliation{\institution{University of Washington, Seattle} \country{USA}}
\author{Chandrakana Nandi}
\affiliation{\institution{University of Washington, Seattle} \country{USA}}
\author{Yisu Remy Wang}
\affiliation{\institution{University of Washington, Seattle} \country{USA}}
\author{Oliver Flatt}
\affiliation{\institution{University of Utah, Salt Lake City} \country{USA}}
\author{Zachary Tatlock}
\affiliation{\institution{University of Washington, Seattle} \country{USA}}
\author{Pavel Panchekha}
\affiliation{\institution{University of Utah, Salt Lake City} \country{USA}}
\newcommand{\missing}[1]{{\color{red}\bfseries [TODO]}}
\newcommand{\egg}{\texorpdfstring{\MakeLowercase{\texttt{egg}}}{\texttt{egg}}\xspace}
\newcommand{\Egg}{\texorpdfstring{\MakeLowercase{\texttt{egg}}}{\texttt{egg}}\xspace}
\newcommand{\egraphs}{\mbox{e-graphs}\xspace}
\newcommand{\egraph}{\mbox{e-graph}\xspace}
\newcommand{\Egraph}{\mbox{E-graph}\xspace}
\newcommand{\Egraphs}{\mbox{E-graphs}\xspace}
\newcommand{\eclass}{\mbox{e-class}\xspace}
\newcommand{\Eclass}{\mbox{E-class}\xspace}
\newcommand{\enode}{\mbox{e-node}\xspace}
\newcommand{\eclasses}{\mbox{e-classes}\xspace}
\newcommand{\enodes}{\mbox{e-nodes}\xspace}
\newcommand{\sz}{Szalinski\xspace}
\newcommand{\find}{\texttt{find}\xspace}
\newcommand{\equivid}{\equiv_{\sf id}}
\newcommand{\equivnode}{\equiv_{\sf node}}
\newcommand{\equivterm}{\equiv_{\sf term}}
\newcommand{\defTodo}[2]{%
  \expandafter\newcommand\csname #1\endcsname[1]{%
    \todo[linecolor=#2,backgroundcolor=#2!25,bordercolor=#2,inline,size=\tiny]{\textbf{#1}: ##1}}}
\newcommand{\defTODO}[2]{%
  \expandafter\newcommand\csname #1\endcsname[1]{%
    \todo[linecolor=#2,backgroundcolor=#2!25,bordercolor=#2,inline,size=\tiny,caption={\textbf{(#1 LONG TODO)}}]{##1}}}
\newcommand{\LeoColor}{Goldenrod}
\newcommand{\MaxColor}{RoyalBlue}
\newcommand{\RemyColor}{Red}
\newcommand{\OliverColor}{OliveGreen}
\newcommand{\ChandraColor}{Magenta}
\newcommand{\PavelColor}{Cerulean}
\newcommand{\ZachColor}{Plum}
\newcommand{\BenColor}{Orchid}
\newcommand{\JamesColor}{Salmon}
\newcommand{\CongrSpeedup}{\ensuremath{88\times}\xspace}
\newcommand{\TotalSpeedup}{\ensuremath{21\times}\xspace}
\newcommand{\RepairsR}{\ensuremath{0.98}\xspace}
\newcommand{\RepairsP}{3.6e-47\xspace}
\newcommand{\nEggTests}{32\xspace}
\newcommand{\nEggTimeouts}{8\xspace}
\begin{abstract}
  An \egraph efficiently represents
  a congruence relation over many expressions.
Although they were originally developed in the late 1970s
  for use in automated theorem provers,
  a more recent technique known as \textit{equality saturation}
  repurposes \egraphs to implement state-of-the-art,
  rewrite-driven compiler optimizations and program synthesizers.
However, \egraphs remain unspecialized for this newer use case.
Equality saturation workloads exhibit distinct characteristics and
  often require ad~hoc \egraph extensions to
  incorporate transformations beyond purely syntactic rewrites.

This work contributes two techniques that make
  \egraphs fast and extensible, specializing them to equality saturation.
A new amortized invariant restoration technique called \textit{rebuilding}
  takes advantage of equality saturation's distinct workload,
  providing asymptotic speedups over current techniques in practice.
A general mechanism called \textit{\eclass analyses}
  integrates domain-specific analyses into the \egraph,
  reducing the need for ad hoc manipulation.

We implemented these techniques in
  a new open-source library called \egg.
Our case studies on
  three previously published applications of equality saturation
  highlight how \egg's performance and flexibility
  enable state-of-the-art results across diverse domains.

\end{abstract}
\begin{document}
\maketitle
\renewcommand{\shortauthors}{Willsey et al.}

\makeatletter
\providecommand\@dotsep{5}
\makeatother

%
%
%
%

\section{Introduction}
\label{sec:intro}

Equality graphs (\egraphs) were originally developed to
  efficiently represent congruence relations
  in automated theorem provers (ATPs).
At a high level, \egraphs~\cite{nelson, pp-congr}
  extend union-find~\cite{unionfind} to compactly represent
  equivalence classes of expressions while
  maintaining a key invariant:
  the equivalence relation is closed under congruence.\footnote{
    Intuitively, congruence simply means
    that $a \equiv b$ implies $f(a) \equiv f(b)$.}

Over the past decade, several projects have repurposed \egraphs
  to implement state-of-the-art, rewrite-driven
  compiler optimizations and program synthesizers
  using a technique known as \textit{equality saturation}~\cite{
    denali, eqsat, eqsat-llvm, szalinski, yogo-pldi20, spores, herbie}.
Given an input program $p$,
  equality saturation constructs an \egraph $E$ that
  represents a large set of programs equivalent to $p$,
  and then extracts the ``best'' program from $E$.
The \egraph is grown by repeatedly applying
  pattern-based rewrites. 
Critically, these rewrites only add information to the \egraph,
  eliminating the need for careful ordering.
Upon reaching a fixed point (\textit{saturation}),
  $E$ will represent \textit{all equivalent ways} to
  express $p$ with respect to the given rewrites.
After saturation (or timeout),
  a final \textit{extraction} procedure
  analyzes $E$ and selects the
  optimal program according to
  a user-provided cost function.

Ideally, a user could simply provide
  a language grammar and rewrites,
  and equality saturation would produce a effective optimizer.
Two challenges block this ideal.
First, maintaining congruence can become expensive as $E$ grows.
In part, this is because \egraphs from the conventional ATP setting
  remain unspecialized to the distinct \textit{equality saturation workload}.
Second, many applications critically depend on
  \textit{domain-specific analyses}, but
  integrating them requires ad~hoc extensions to the \egraph.
The lack of a general extension mechanism
  has forced researchers to re-implement
  equality saturation from scratch several times~\cite{herbie, eqsat, wu_siga19}.
These challenges limit equality saturation's practicality.

\textit{Equality Saturation Workload. $\,$}
ATPs frequently query and modify \egraphs and
  additionally require \textit{backtracking} to
  undo modifications (e.g., in  DPLL(T)~\cite{dpll}).
These requirements force conventional \egraph designs
  to maintain the congruence invariant after every operation.
In contrast,
  the equality saturation workload does not require backtracking and
  can be factored into distinct phases of
  (1) querying the \egraph to simultaneously find all rewrite matches and
  (2) modifying the \egraph to merge in equivalences for all matched terms.

We present a new amortized algorithm
  called \textit{rebuilding} that defers \egraph invariant maintenance
  to equality saturation phase boundaries without compromising soundness.
Empirically, rebuilding provides asymptotic speedups
  over conventional approaches.

\textit{Domain-specific Analyses. $\,$}
Equality saturation is primarily driven by syntactic rewriting,
  but many applications require additional interpreted reasoning
  to bring domain knowledge into the \egraph.
Past implementations have resorted to
  ad~hoc \egraph manipulations
  to integrate what would otherwise be
  simple program analyses like constant folding.

To flexibly incorporate such reasoning,
  we introduce a new, general mechanism called \textit{\eclass analyses}.
An \eclass analysis annotates each \eclass
  (an equivalence class of terms)
  with facts drawn from a semilattice domain.
As the \egraph grows,
  facts are introduced, propagated, and joined
  to satisfy the \textit{\eclass analysis invariant},
  which relates analysis facts to the terms represented in the \egraph.
Rewrites cooperate with \eclass analyses by
  depending on analysis facts and
  adding equivalences that in turn
  establish additional facts.
Our case studies and examples
  (Sections \ref{sec:impl} and \ref{sec:case-studies})
  demonstrate \eclass analyses like
  constant folding and free variable analysis
  which required bespoke customization in
  previous equality saturation implementations.

\textit{\Egg. $\,$}
We implement rebuilding and \eclass analyses in
  an open-source\footnote{
    web: \url{https://egraphs-good.github.io},
    source: \url{https://github.com/egraphs-good/egg},
    documentation: \url{https://docs.rs/egg}
  }
  library called \egg (\textbf{e}-\textbf{g}raphs \textbf{g}ood).
\Egg specifically targets equality saturation,
  taking advantage of its workload characteristics and
  supporting easy extension mechanisms to
  provide \egraphs specialized for
  program synthesis and optimization.
\Egg also addresses more prosaic challenges,
  e.g., parameterizing over user-defined
  languages, rewrites, and cost functions
  while still providing an optimized implementation.
Our case studies demonstrate how \egg's features
  constitute a general, reusable \egraph library that can
  support equality saturation across diverse domains.

In summary, the contributions of this paper include:

\begin{itemize}


\item Rebuilding (\autoref{sec:rebuilding}),
  a technique that restores key correctness and performance invariants
  only at select points in the equality saturation algorithm.
  Our evaluation demonstrates that rebuilding is faster than
  existing techniques in practice.

\item \Eclass analysis (\autoref{sec:extensions}),
  a technique for integrating domain-specific analyses
  that cannot be expressed as purely syntactic rewrites.
  The \eclass analysis invariant provides the guarantees
  that enable cooperation between rewrites and analyses.



\item A fast, extensible implementation of
  \egraphs in a library dubbed \egg (\autoref{sec:impl}).

\item Case studies of real-world, published tools that use \egg
    for deductive synthesis and program optimization across domains such as
    floating point accuracy,
    linear algebra optimization,
    and CAD program synthesis
    (\autoref{sec:case-studies}).
    Where previous implementations existed,
      \egg is orders of magnitude faster and offers more features.
\end{itemize}

\section{Background}
\label{sec:background}

\egg builds on \egraphs and equality saturation.
  This section describes those techniques and
  presents the challenges that \egg addresses.



\subsection{\Egraphs}
\label{sec:egraphs}

An \textit{\egraph} is a data structure that stores a set of terms and a
  congruence relation over those terms.
Originally developed for and still used in the
  heart of theorem provers~\cite{nelson, simplify, z3},
  \egraphs have also been used to power a program optimization technique
  called \textit{equality saturation}~%
  \cite{denali, eqsat, eqsat-llvm, szalinski, yogo-pldi20, spores, herbie}.

\subsubsection{Definitions}

\begin{figure}
  \centering
  \begin{align*}
     \text{function symbols} \quad & f,g                                   \\[-0.2em]
     \text{\eclass ids} \quad & a,b & \text{opaque identifiers}            \\[-0.2em]
     \text{terms}     \quad & t  ::= f \mid f(t_1, \ldots, t_m) & m \geq 1 \\[-0.2em]
     \text{\enodes}   \quad & n  ::= f \mid f(a_1, \ldots, a_m) & m \geq 1 \\[-0.2em]
     \text{\eclasses} \quad & c  ::= \{ n_1, \ldots, n_m \}     & m \geq 1
  \end{align*}
  \caption{
    Syntax and metavariables for the components of an \egraph.
    Function symbols may stand alone as constant \enodes and terms.
    An \eclass id is an opaque identifier that can be compared for equality with $=$.
  }
  \label{fig:syntax}
\end{figure}

Intuitively,
  an \egraph is a set of equivalence classes (\textit{\eclasses}).
Each \eclass is a set of \textit{\enodes} representing equivalent terms from a given language,
  and an \enode is a function symbol paired with a list of children \eclasses.
More precisely:

\begin{definition}[Definition of an \Egraph]
  \label{def:egraph}

  Given the definitions and syntax in \autoref{fig:syntax},
  an \textit{\egraph} is a tuple $(U, M, H)$ where:
  \begin{itemize}
    \item
    A union-find data structure~\cite{unionfind} $U$
      stores an equivalence relation (denoted with $\equivid$)
      over \eclass ids.

    \item
    The \textit{\eclass map} $M$ maps \eclass ids to \eclasses.
    All equivalent \eclass ids map to the same \eclass, i.e.,
      $a \equivid b$ iff $M[a]$ is the same set as $M[b]$.
    An \eclass id $a$ is said to \textit{refer to} the \eclass $M[\find(a)]$.

    \item The \textit{hashcons}\footnote{
      We use the term \textit{hashcons} to evoke the memoization technique,
      since both avoid creating new duplicates of existing objects.
    }
    $H$ is a map from \enodes to \eclass ids.
  \end{itemize}


  Note that an e-class has an identity
   (its canonical \eclass id),
   but an \enode does not.\footnote{
    Our definition of an \egraph reflects \egg's design
      and therefore differs with some other \egraph definitions and implementations.
    In particular, making e-classes but not e-nodes identifiable is unique to
      our definition.
  }
  We use \eclass id $a$ and the \eclass $M[\find(a)]$ synonymously when clear from the context.

\end{definition}

\begin{definition}[Canonicalization]
    An \egraph's union-find $U$ provides a \find operation that canonicalizes \eclass ids
      such that ${\find(U, a) = \find(U, b)}$ iff ${a \equivid b}$.
    We omit the first argument of \find where clear from context.
    \begin{itemize}
      \item An \eclass id $a$ is canonical iff $\find(a) = a$.
      \item \raggedright
            An \enode $n$ is canonical iff $n = \texttt{canonicalize}(n)$,
            where ${\texttt{canonicalize}(f(a_{1}, a_{2}, ...)) = f(\find(a_{1}), \find(a_{2}), ...)}$.
    \end{itemize}
\end{definition}

\begin{definition}[Representation of Terms]
  An \egraph, \eclass, or \enode is said to \textit{represent} a term $t$ if $t$ can be
    ``found'' within it. Representation is defined recursively:
  \begin{itemize}
    \item An \egraph represents a term if any of its \eclasses do.
    \item An \eclass $c$ represents a term if any \enode $n \in c$ does.
    \item An \enode $f(a_{1}, a_{2}, ...)$ represents a term $f(t_{1}, t_{2}, ...)$
          if they have the same function symbol $f$
          and \eclass $M[a_{i}]$ represents term $t_{i}$.
  \end{itemize}

  When each \eclass is a singleton (containing only one \enode),
    an \egraph is essentially a term graph with sharing.
  \autoref{fig:egraph-rewrite1} shows an \egraph that represents the
    expression $(a \times 2) / 2$.
\end{definition}

\begin{definition}[Equivalence]
  An \egraph defines three equivalence relations.
  \begin{itemize}
    \item Over \eclass ids: $a \equivid b$ iff $\find(a) = \find(b)$.
    \item Over \enodes: $n_{1} \equivnode n_{2}$ iff \enodes $n_{1}, n_{2}$ are in the same \eclass, 
          i.e., $\exists a.\ n_{1}, n_{2} \in M[a]$.
    \item Over terms: $t_{1} \equivterm t_{2}$ iff terms $t_{1}, t_{2}$ are represented in the same \eclass.
  \end{itemize}

  We use $\equiv$ without the subscript when the relation is clear from context.
\end{definition}

\begin{definition}[Congruence]
  For a given \egraph, let $\cong$ denote a congruence relation over \enodes such that
  ${f(a_{1}, a_{2}, ...) \cong f(b_{1}, b_{2}, ...)}$ iff $a_{i} \equivid b_{i}$.
  Let $\cong^{*}$ denote the congruence closure of $\equivnode$,
   i.e., the smallest superset of $\equivnode$ that is also a superset of $\cong$.
  Note that there may be two \enodes such that
    $n_{1} \cong^{*} n_{2}$ but
    $n_{1} \not\cong n_{2}$ and
    $n_{1} \not\equivnode n_{2}$.
  The relation $\cong$ only represents a single step of congruence;
  more than one step may be required to compute the congruence closure.
\end{definition}

\subsubsection{\Egraph Invariants}
\label{sec:invariants}

The \egraph must maintain invariants in order to
  correctly and efficiently implement the operations given in \autoref{sec:interface}.
This section only defines the invariants,
  discussion of how they are maintained is deferred to \autoref{sec:rebuild}.
These are collectively referred to as the \textit{e-graph invariants}.

\begin{definition}[The Congruence Invariant]
  \label{def:cong-inv}
  The equivalence relation over \enodes must be closed over congruence,
    i.e., $(\equivnode) = (\cong^{*})$.
  The \egraph must ensure that congruent \enodes are in the same \eclass.
  Since identical \enodes are trivially congruent,
   this implies that an \enode must be uniquely contained in a single \eclass.
\end{definition}

\begin{definition}[The Hashcons Invariant]
  \label{def:hash-inv}
  The hashcons $H$ must map all canonical \enodes to their \eclass ids.
  In other words:
  $$ \enode\ n \in M[a] \iff H[\texttt{canonicalize}(n)] = \find(a) $$

  If the hashcons invariant holds, then a procedure $\texttt{lookup}$
    can quickly find which \eclass (if any) has an \enode congruent to a given \enode $n$:
  $\texttt{lookup}(n) = H[\texttt{canonicalize}(n)]$.
\end{definition}

\begin{figure}
  \begin{subfigure}[t]{0.175\linewidth}
    \centering
    \includegraphics[height=30mm]{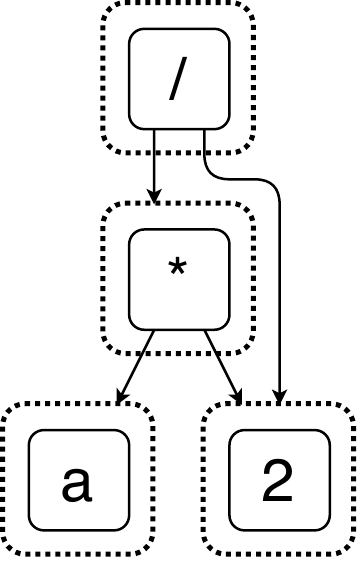}
    \caption{Initial \egraph contains ${(a \times 2) / 2}$.}
    \label{fig:egraph-rewrite1}
  \end{subfigure}
  \hfill
  \begin{subfigure}[t]{0.23\linewidth}
    \centering
    \includegraphics[height=30mm]{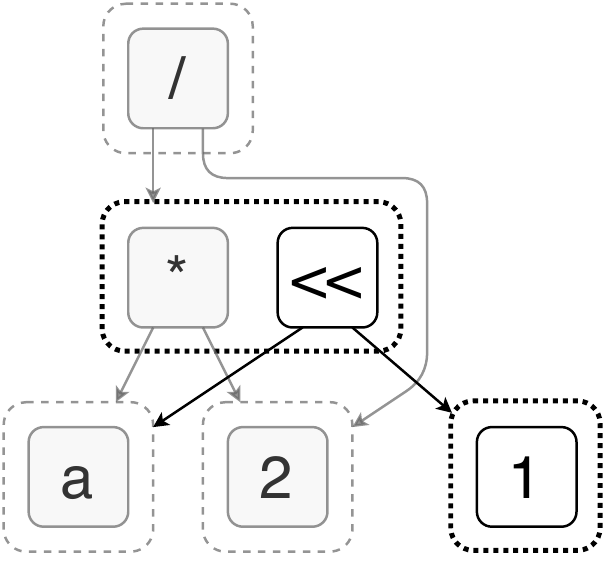}
    \caption{
      After applying rewrite ${x \times 2 \to x \ll 1}$.
    }
    \label{fig:egraph-rewrite2}
  \end{subfigure}
  \hfill
  \begin{subfigure}[t]{0.23\linewidth}
    \centering
    \includegraphics[height=30mm]{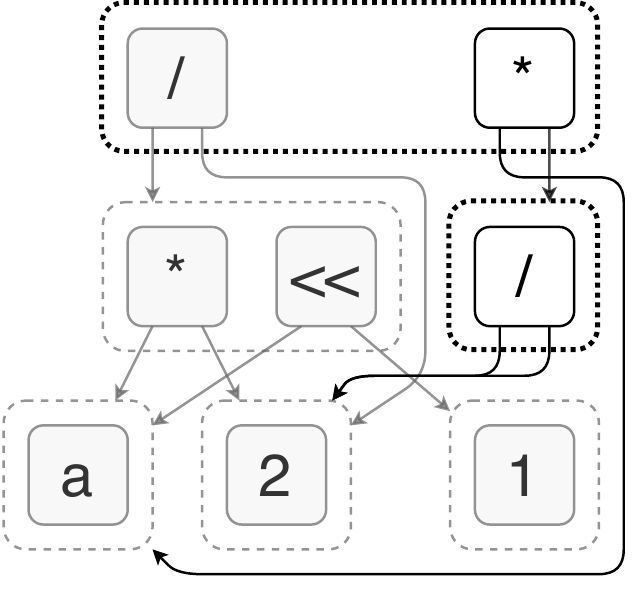}
    \caption{
      After applying rewrite ${(x \times y) / z \to x \times (y / z)}$.
    }
    \label{fig:egraph-rewrite3}
  \end{subfigure}
  \hfill
  \begin{subfigure}[t]{0.24\linewidth}
    \centering
    \includegraphics[height=30mm]{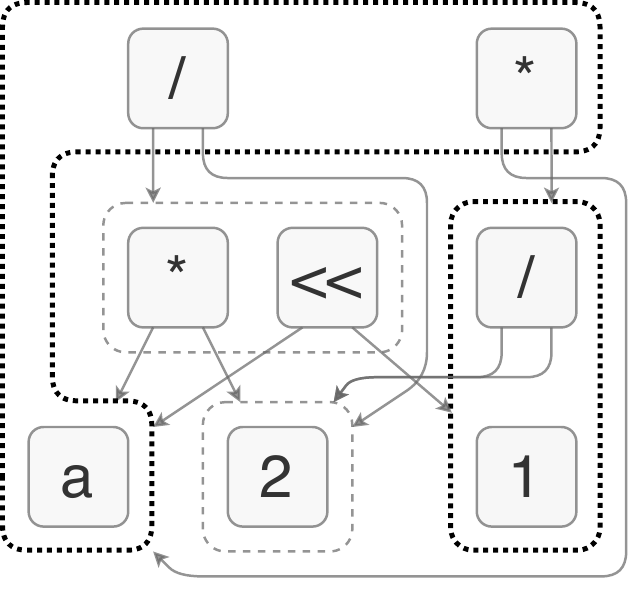}
    \caption{
      After applying rewrites ${x / x \to 1}$ and ${1 \times x \to x}$.
    }
    \label{fig:egraph-rewrite4}
  \end{subfigure}
  \caption{
    An \egraph consists of \eclasses (dashed boxes) containing
      equivalent \enodes (solid boxes).
    Edges connect \enodes to their child \eclasses.
    Additions and modifications are emphasized in black.
    Applying rewrites to an \egraph adds new \enodes and edges,
      but nothing is removed.
    Expressions added by rewrites are merged with the matched \eclass.
    In \autoref{fig:egraph-rewrite4}, the rewrites do not add any new nodes,
      only merge \eclasses.
    The resulting \egraph has a cycle,
      representing infinitely many expressions:
      $a$, $a \times 1$, $a \times 1 \times 1$, and so on.
  }
  \label{fig:egraph-rewrite}
\end{figure}

\subsubsection{Interface and Rewriting}
\label{sec:interface}

\Egraphs bear many similarities to the classic union-find data
  structure that they employ internally,
  and they inherit much of the terminology.
\Egraphs provide two main low-level mutating operations:
\begin{itemize}
    \item \texttt{add} takes an \enode $n$ and:
    \begin{itemize}
        \item if $\texttt{lookup}(n) = a$, return $a$;
        \item if $\texttt{lookup}(n) = \emptyset$,
              then set $M[a] = \{ n \}$ and return the id $a$.
    \end{itemize}
    \item \texttt{merge} (sometimes called \texttt{assert} or \texttt{union})
    takes two \eclass ids $a$ and $b$,
    unions them in the union-find $U$,
    and combines the \eclasses by setting both $M[a]$ and $M[b]$ to $M[a] \cup M[b]$.
\end{itemize}

Both of these operations must take additional steps to maintain the congruence
  invariant.
Invariant maintenance is discussed in \autoref{sec:rebuilding}.

\Egraphs also offers operations for querying the data structure.
\begin{itemize}
    \item \texttt{find} canonicalizes \eclass ids using the union-find $U$ as described in definition \ref{def:egraph}.
    \item \texttt{ematch} performs the
          \textit{e-matching}~\cite{simplify, ematching}
          procedure for finding patterns in the \egraph.
          \texttt{ematch} takes a pattern term $p$ with variable placeholders
          and returns a list of tuples $(\sigma, c)$ where $\sigma$ is a substitution of
          variables to \eclass ids such that $p[\sigma]$ is represented in \eclass $c$.
\end{itemize}
These can be composed to perform rewriting over the
  \egraph.
To apply a rewrite $\ell \to r$ to an \egraph,
  \texttt{ematch}
  finds tuples $(\sigma, c)$ where \eclass $c$ represents $\ell[\sigma]$.
Then, for each tuple,
  \mbox{\texttt{merge($c$, add($r[\sigma]$))}} adds $r[\sigma]$ to the \egraph
  and unifies it with the matching \eclass c.

\autoref{fig:egraph-rewrite} shows an \egraph undergoing a series of rewrites.
Note how the process is only additive; the initial term $(a \times 2) / 2$ is
  still represented in the \egraph.
Rewriting in an \egraph can also saturate, meaning the \egraph has
  learned every possible equivalence derivable from the given rewrites.
If the user tried to apply $x \times y \to y \times x$ to an \egraph twice,
  the second time would add no additional \enodes and perform no new merges;
  the \egraph can detect this and stop applying that rule.

\subsection{Equality Saturation}
\label{sec:eqsat}

Term rewriting~\cite{nachum-rewrites} is a time-tested approach
  for equational reasoning in
  program optimization~\cite{eqsat, denali},
  theorem proving~\cite{simplify, z3},
  and program transformation~\cite{graphs}.
In this setting, a tool repeatedly chooses one of a set of axiomatic rewrites,
  searches for matches of the left-hand pattern in the given
  expression, and replaces matching instances with the substituted
  right-hand side.

Term rewriting is typically destructive and ``forgets'' the matched
  left-hand side.
Consider applying a simple strength reduction rewrite:
  ${ (a \times 2) / 2 \to (a \ll 1) / 2 }$.
The new term carries no
  information about the initial term.
Applying strength reduction at this point prevents us from canceling out $2/2$.
In the compilers community, this classically tricky question of when to apply
  which rewrite is called the \textit{phase ordering} problem.

One solution to the phase ordering problem would simply apply all
  rewrites simultaneously, keeping track of every expression seen.
This eliminates the problem of choosing the right rule, but
  a naive implementation would require space exponential in the number
  of given rewrites.
\textit{Equality saturation}~\cite{eqsat, eqsat-llvm} is a technique to do this
  rewriting efficiently using an \egraph.

\begin{figure}
  \begin{minipage}{0.48\linewidth}
    \centering
    \includegraphics[width=0.9\linewidth]{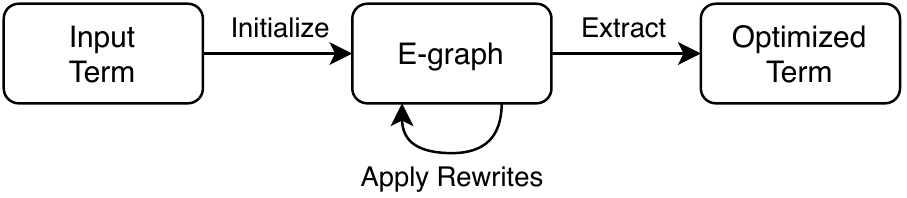}
  \end{minipage}
  \hfill
  \begin{minipage}{0.46\linewidth}
  \begin{lstlisting}[language=Python, gobble=4, numbers=left, basicstyle=\scriptsize\ttfamily]
    def equality_saturation(expr, rewrites):
      egraph = initial_egraph(expr)

      while not egraph.is_saturated_or_timeout():

        for rw in rewrites:
          for (subst, eclass) in egraph.ematch(rw.lhs):
            eclass2 = egraph.add(rw.rhs.subst(subst))
            egraph.merge(eclass, eclass2)

      return egraph.extract_best()
  \end{lstlisting}
  \end{minipage}
  \caption{
    Box diagram and pseudocode for equality saturation.
    Traditionally, equality saturation maintains the \egraph data structure
      invariants throughout the algorithm.
  }
  \label{fig:eq-sat-bg}
\end{figure}

\autoref{fig:eq-sat-bg} shows the equality saturation workflow.
First, an initial \egraph is created from the input term.
The core of the algorithm runs a set of rewrite rules until the \egraph is
  saturated (or a timeout is reached).
Finally, a procedure called \textit{extraction} selects the optimal represented
  term according to some cost function.
For simple cost functions, a bottom-up, greedy traversal of the \egraph suffices
  to find the best term.
Other extraction procedures have been explored for more complex cost
  functions~\cite{spores, wu_siga19}.

Equality saturation eliminates the tedious and often error-prone
  task of choosing when to apply which rewrites,
  promising an appealingly simple workflow: state the
  relevant rewrites for the language, create an initial \egraph from a given
  expression, fire the rules until saturation,
  and finally extract the cheapest equivalent expression.
Unfortunately, the technique remains ad hoc; prospective equality saturation
  users must implement their own \egraphs customized to their language, avoid
  performance pitfalls, and hack in the ability to do interpreted reasoning
  that is not supported by purely syntactic rewrites.
\egg aims to address each aspect of these difficulties.

\subsection{Equality Saturation and Theorem Proving}

An equality saturation engine and a theorem prover each have capabilities that
  would be impractical to replicate in the other.
Automated theorem provers like satisfiability modulo theory (SMT) solvers are
  general tools that, in addition to supporting satisfiability queries,
  incorporate sophisticated, domain-specific solvers to allow interpreted
  reasoning within the supported theories.
On the other hand, equality saturation is specialized for optimization, and its
  extraction procedure directly produces an optimal term with respect to a given
  cost function.


While SMT solvers are indeed the more general tool,
  equality saturation is not superseded by SMT;
  the specialized approach can be much faster when the full generality of SMT is
  not needed.
To demonstrate this, we replicated a portion of the recent TASO paper~\cite{taso},
  which optimizes deep learning models.
As part of the work, they must verify a set of synthesized equalities with
  respect to a trusted set of universally quantified axioms.
TASO uses Z3~\cite{z3} to perform the
  verification even though most of Z3's features
  (disjunctions, backtracking, theories, etc.)
  were not required.
An equality saturation engine can also be used for verifying these equalities
  by adding the left and right sides of
  each equality to an \egraph,
  running the axioms as rewrites,
  and then checking if both sides end up in the same \eclass.
Z3 takes 24.65 seconds to perform the verification;
  \egg performs the same task in 1.56 seconds ($15\times$ faster),
  or only 0.52 seconds ($47\times$ faster) when using
  \egg's batched evaluation (\autoref{sec:egg-batched}).




\section{Rebuilding: A New Take on \Egraph Invariant Maintenance}
\label{sec:rebuild}
\label{sec:rebuilding}

Traditionally~\cite{nelson, simplify},
  \egraphs maintain their data structure invariants
  after each operation.
We separate this invariant restoration into a procedure called \textit{rebuilding}.
This separation allows the client to
  choose when to enforce the \egraph invariants.
Performing a rebuild immediately after every operation replicates the
  traditional approach to invariant maintenance.
In contrast, rebuilding less frequently can amortize the cost of invariant
  maintenance, significantly improving performance.

In this section, we first describe how e-graphs have
  traditionally maintained invariants (\autoref{sec:upward}).
We then describe the rebuilding framework and how it captures a spectrum of
  invariant maintenance approaches, including the traditional one
  (\autoref{sec:rebuilding-detail}).
Using this flexibility, we then give a modified algorithm for equality
  saturation that enforces the \egraph invariants at only select points
  (\autoref{sec:rebuilding-eqsat}).
We finally demonstrate that this new approach offers an asymptotic speedup over
  traditional equality saturation (\autoref{sec:rebuild-eval}).






\subsection{Upward Merging}
\label{sec:upward}

Both mutating operations on the \egraph
  (\texttt{add} and \texttt{merge}, \autoref{sec:interface})
  can break the \egraph invariants if not done carefully.
\Egraphs have traditionally used \textit{hashconsing} and
  \textit{upward merging} to maintain the congruence invariant.

The \texttt{add} operation relies on the hashcons invariant
  (Definition \ref{def:hash-inv})
  to quickly check whether the \enode $n$ to be added---or one congruent to it---is
  already present.
Without this check, \texttt{add} would create a new \eclass with $n$ in it
  even if some $n' \cong n$ was already in the \egraph,
  violating the congruence invariant.

The \texttt{merge} operation \eclasses can violate both \egraph invariants.
If $f(a, b)$ and $f(a, c)$ reside in two different \eclasses $x$ and $y$,
  merging $b$ and $c$ should also merge $x$ and $y$ to maintain the congruence invariant.
This can propagate further, requiring additional merges.


\Egraphs maintain a \textit{parent list} for each \eclass
  to maintain congruence.
The parent list for \eclass $c$ holds all \enodes that have $c$ as a child.
When merging two \eclasses, \egraphs inspect these parent lists to find parents
  that are now congruent, recursively ``upward merging'' them if necessary.

The \texttt{merge} routine must also perform bookkeeping to preserve the
  hashcons invariant.
In particular, merging two \eclasses may change how parent \enodes of those
  \eclasses are canonicalized.
The \texttt{merge} operation must therefore
  remove, re-canonicalize, and replace those \enodes in the
  hashcons.
In existing \egraph implementations~\cite{herbie} used for equality saturation,
  maintaining the invariants while merging can take the vast majority of
  run time.

\subsection{Rebuilding in Detail}
\label{sec:rebuilding-detail}

\begin{figure}
  \begin{minipage}[t]{0.47\linewidth}
    \begin{lstlisting}[gobble=4, numbers=left, basicstyle=\scriptsize\ttfamily, escapechar=|]
    def add(enode):
      enode = self.canonicalize(enode)
      if enode in self.hashcons:
        return self.hashcons[enode]
      else:
        eclass_id = self.new_singleton_eclass(enode)
        for child in enode.children:
          child.parents.add(enode, eclass_id)
        self.hashcons[enode] = eclass_id
        return eclass_id

    def merge(id1, id2)
      if self.find(id1) == self.find(id2):
        return self.find(id1)
      new_id = self.union_find.union(id1, id2)
      # traditional egraph merge can be
      # emulated by calling rebuild right after
      # adding the eclass to the worklist
      self.worklist.add(new_id) |\label{line:worklist-add}|
      return new_id

    def canonicalize(enode) |\label{line:canon}|
      new_ch = [self.find(e) for e in enode.children]
      return mk_enode(enode.op, new_ch)

    def find(eclass_id):
      return self.union_find.find(eclass_id)
    \end{lstlisting}
  \end{minipage}
  \hfill
  \begin{minipage}[t]{0.47\linewidth}
    \begin{lstlisting}[gobble=4, numbers=left, firstnumber=27, basicstyle=\scriptsize\ttfamily]
    def rebuild():
      while self.worklist.len() > 0:
        # empty the worklist into a local variable
        todo = take(self.worklist)
        # canonicalize and deduplicate the eclass refs
        # to save calls to repair
        todo = { self.find(eclass) for eclass in todo }
        for eclass in todo:
          self.repair(eclass)

    def repair(eclass):
      # update the hashcons so it always points
      # canonical enodes to canonical eclasses
      for (p_node, p_eclass) in eclass.parents:
        self.hashcons.remove(p_node)
        p_node = self.canonicalize(p_node)
        self.hashcons[p_node] = self.find(p_eclass)

      # deduplicate the parents, noting that equal
      # parents get merged and put on the worklist
      new_parents = {}
      for (p_node, p_eclass) in eclass.parents:
        p_node = self.canonicalize(p_node)
        if p_node in new_parents:
          self.merge(p_eclass, new_parents[p_node])
        new_parents[p_node] = self.find(p_eclass)
      eclass.parents = new_parents
    \end{lstlisting}
  \end{minipage}
  \caption{
    Pseudocode for the \texttt{add}, \texttt{merge}, \texttt{rebuild}, and
    supporting methods.
    In each method, \texttt{self} refers to the \egraph being modified.
  }
  \label{fig:rebuild-code}
\end{figure}

Traditionally, invariant restoration is part of the
  \texttt{merge} operation itself.
Rebuilding separates these concerns,
  reducing \texttt{merge}'s obligations
  and allowing for amortized invariant maintenance.
In the rebuilding paradigm,
  \texttt{merge} maintains a \textit{worklist} of \eclass ids that need to
  be ``upward merged'', i.e., \eclasses whose parents are possibly congruent but
  not yet in the same \eclass.
The \texttt{rebuild} operation processes this worklist, restoring the invariants
  of deduplication and congruence.
Rebuilding is similar to other approaches in how it restores congruence
  (see \nameref{sec:related} for comparison to \citet{downey-cse});
  but it uniquely allows the client to choose when to restore invariants in the
  context of a larger algorithm like equality saturation.

\autoref{fig:rebuild-code} shows pseudocode for the main \egraph operations and
  rebuilding.
Note that \texttt{add} and \texttt{canonicalize} are given for completeness, but
  they are unchanged from the traditional \egraph implementation.
The \texttt{merge} operation is similar, but it only adds the new \eclass to the
  worklist instead of immediately starting upward merging.
Adding a call to \texttt{rebuild} right after the addition to
  the worklist (\autoref{fig:rebuild-code} line \ref{line:worklist-add})
  would yield the traditional behavior of restoring the invariants immediately.

The \texttt{rebuild} method essentially calls \texttt{repair} on the \eclasses
  from the worklist until the worklist is empty.
Instead of directly manipulating the worklist, \egg's \texttt{rebuild} method
  first moves it into a local variable and deduplicates \eclasses
  up to equivalence.
Processing the worklist may \texttt{merge} \eclasses,
  so breaking the worklist into chunks ensures that \eclass ids made
  equivalent in the previous chunk are deduplicated in the subsequent chunk.

The actual work of \texttt{rebuild} occurs in the \texttt{repair} method.
\texttt{repair} examines an \eclass $c$ and first canonicalizes \enodes in the
  hashcons that have $c$ as a child.
Then it performs what is essentially one ``layer'' of upward
  merging:
if any of the parent \enodes have become congruent, then their
  \eclasses are merged and the result is added to the worklist.

Deduplicating the worklist, and thus reducing calls to \texttt{repair},
  is at the heart of why deferring rebuilding improves
  performance.
Intuitively, the upward merging process of rebuilding traces out a ``path'' of
  congruence through the \egraph.
When rebuilding happens immediately after \texttt{merge}
  (and therefore frequently), these paths can substantially overlap.
By deferring rebuilding, the chunk-and-deduplicate approach can coalesce the
overlapping parts of these paths, saving what would have been redundant work.
In our modified equality saturation algorithm (\autoref{sec:rebuilding-eqsat}),
  deferred rebuilding is responsible for a significant, asymptotic speedup
  (\autoref{sec:rebuild-eval}).

\subsubsection{Examples of Rebuilding}

Deferred rebuilding speeds up congruence maintenance by amortizing the work of
  maintaining the hashcons invariant.
Consider the following terms in an \egraph:
  $f_{1}(x), ..., f_{n}(x),\, y_{1}, ..., y_{n}$.
Let the workload be $\texttt{merge}(x, y_{1}), ..., \texttt{merge}(x, y_{n})$.
Each merge may change the canonical representation of the $f_{i}(x)$s,
  so the traditional invariant maintenance strategy
  could require $O(n^{2})$ hashcons updates.
With deferred rebuilding the \texttt{merges} happen before
  the hashcons invariant is restored,
  requiring no more than $O(n)$ hashcons updates.

Deferred rebuilding can also reduce the number of calls to \texttt{repair}.
Consider the following $w$ terms in an \egraph,
  each nested under $d$ function symbols:
  $$f_1 (f_2(\ldots f_d(x_1))), \quad\ldots,\quad f_1(f_2(\ldots f_d(x_w)))$$
Note that $w$ corresponds the width of this group of terms, and $d$ to the depth.
Let the workload be $w-1$ merges that merge all the $x$s together:
  for $i \in [2, w], \texttt{merge}(x_{1}, x_{i})$.

In the traditional upward merging paradigm
  where \texttt{rebuild} is called after every \texttt{merge},
  each $\texttt{merge}(x_i, x_j)$ will require $O(d)$ calls to \texttt{repair}
  to maintain congruence, one for each layer of $f_{i}$s.
Over the whole workload, this requires $O(wd)$ calls to \texttt{repair}.

With deferred rebuilding, however, the $w-1$ merges can all take place before
  congruence must be restored.
Suppose the $x$s are all merged into an \eclass $c_{x}$
When \texttt{rebuild} finally is called,
  the only element in the deduplicated worklist is $c_{x}$.
Calling \texttt{repair} on $c_{x}$ will merge the \eclasses of the $f_{d}$
  \enodes into an \eclass $c_{f_{d}}$,
  adding the \eclasses that contained those \enodes back to the worklist.
When the worklist is again deduplicated,
  $c_{f_{d}}$ will be the only element,
  and the process repeats.
Thus, the whole workload only incurs $O(d)$ calls to \texttt{repair},
  eliminating the factor corresponding to the width of this group of terms.
\autoref{fig:repair-plot} shows that the number calls to \texttt{repair} is
  correlated with time spent doing congruence maintenance.

\subsubsection{Proof of Congruence}

Intuitively, rebuilding is a delay of the upward merging process, allowing
  the user to choose when to restore the \egraph invariants.
They are substantially similar in structure, with a critical a difference in when
  the code is run.
Below we offer a proof demonstrating that rebuilding restores the
\egraph congruence invariant.

\begin{theorem}
  Rebuilding restores congruence and terminates.
\end{theorem}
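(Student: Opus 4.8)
The statement carries two independent obligations—termination and restoration of the congruence invariant (\autoref{def:cong-inv})—so I would prove them separately. My overall plan for correctness is to route everything through the hashcons invariant (\autoref{def:hash-inv}): once I know that \texttt{rebuild} leaves $H$ correct for every \enode, congruence follows almost immediately. Indeed, if $n_1 \cong n_2$ then their children are pairwise $\equivid$-equivalent, so $\texttt{canonicalize}(n_1) = \texttt{canonicalize}(n_2)$; the hashcons invariant then forces $H$ to map this common canonical \enode to a single \eclass id, so $n_1$ and $n_2$ lie in the same \eclass, i.e.\ $n_1 \equivnode n_2$. Since $\equivnode$ is already an equivalence relation, closure under this one-step $\cong$ (taken with respect to the final, fixed $\equivid$) upgrades to $(\cong^*) \subseteq (\equivnode)$ by transitivity; the reverse inclusion is immediate from the definition of $\cong^*$. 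Thus the entire correctness argument reduces to showing that an empty worklist implies a globally correct hashcons.

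For termination I would exhibit a decreasing measure. No \enode is ever created during \texttt{rebuild}, so the number $N$ of \enodes is fixed and the number $k$ of distinct \eclasses (union-find roots) is bounded by $N$. The only way an item is appended to the worklist is through a call to \texttt{merge} that actually unions two distinct classes—the \texttt{find}-equality guard makes every other call a no-op—and each such union strictly decreases $k$. Hence over a single \texttt{rebuild} at most $k_0 - 1$ items are ever enqueued beyond the starting worklist, where $k_0$ is the count on entry; once these are exhausted, every \texttt{merge} short-circuits and nothing new is enqueued. Since each \texttt{repair} call touches only the finite parent list of one \eclass, the worklist is drained in finitely many steps and \texttt{rebuild} halts.

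For correctness I would maintain a \emph{worklist-responsibility} invariant across iterations of the \texttt{rebuild} loop: whenever an \enode $n$ has a stale hashcons entry (its canonical form is absent from $H$ or maps to a non-canonical or wrong id) or is congruent to an \enode in a different \eclass, some child \eclass of $n$ is equivalent to an id still on the worklist. I would first check that this holds when \texttt{rebuild} is entered, since every preceding \texttt{merge} that could have introduced staleness or a new congruence enqueued the freshly-unioned class, which is exactly a child of the affected \enodes. I would then show \texttt{repair}$(c)$ preserves it: re-canonicalizing and reinserting every parent of $c$ restores $H$ for those parents, deduplicating the parents detects every pair that has just become congruent through $c$ and merges them (enqueuing the results), and any congruence it cannot yet see is witnessed by a different worklist entry. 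When the loop exits with an empty worklist the invariant's conclusion is unsatisfiable, so no \enode is stale and no two congruent \enodes sit in different \eclasses—precisely the hashcons and congruence invariants.

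The main obstacle I anticipate is the congruence half of the worklist-responsibility invariant: arguing that \texttt{repair} uncovers \emph{all} congruences, not merely those visible among the parents of the single \eclass it is handed. The delicate points are (i) that merging two \eclasses must pool their parent lists, so that two congruent \enodes sharing a just-merged child genuinely appear together in one \texttt{repair} call and get deduplicated, and (ii) that cascading congruences—where closing one congruence creates another a layer up—are safely deferred, since each triggering \texttt{merge} re-enqueues its class and termination guarantees every such class is eventually repaired. Pinning down these two points, together with the observation that the final $\equivid$ is frozen once the worklist is empty so that no new one-step congruences can appear, is what makes the reduction to the hashcons invariant airtight.
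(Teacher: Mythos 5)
Your proposal is correct, but it reaches both conclusions by a genuinely different route from the paper. For termination, the paper first observes that merging only congruent nodes leaves the target relation $\cong^*$ fixed, defines the deficit $I = (\cong^*) \setminus (\equivnode)$, and shows that each call to \texttt{repair} decreases the tuple $(|I|, |W|)$ lexicographically; you instead bound the total number of worklist insertions directly, noting that the \texttt{find}-equality guard in \texttt{merge} means every enqueue is paid for by a genuine union, which strictly decreases the number of union-find roots. Your measure is more elementary and decouples termination from any reasoning about congruence. For correctness, the paper stops at the observation that an empty worklist means upward merging has been effectively performed, and then cites \citet[Chapter 7]{nelson} to conclude $|I| = 0$; you instead make that last step self-contained via your worklist-responsibility invariant (any stale or incongruent \enode has a child class pending on the worklist) and a clean reduction of the congruence invariant to the hashcons invariant, using the fact that a relation closed under one-step congruence and equivalence contains the full closure $\cong^*$. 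What each buys: the paper's argument is shorter and elegantly couples progress toward congruence with termination in a single measure, but it leans on an external result; yours needs no citation and makes explicit exactly where the data-structure details are load-bearing---parent-list pooling on merge and the safe deferral of cascading congruences---which the paper's pseudocode leaves implicit and its appeal to Nelson quietly absorbs. Both arguments are sound; yours trades brevity for self-containment.
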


\begin{proof}
  Since rebuilding only merges congruent nodes,
    the congruence closure $\cong^{*}$ is fixed even though $\equivnode$ changes.
  When $(\equivnode) = (\cong^*)$, congruence is restored.
  Note that both $\equivnode$ and $\cong^*$ are finite.
  We therefore show that rebuilding causes $\equivnode$ to approach $\cong^*$.
  We define the set of incongruent \enode pairs as $I = (\cong^*) \setminus (\equivnode)$;
  in other words,
    $(n_{1}, n_{2}) \in I$ if $n_{1} \cong^{*} n_{2}$
     but $n_{1} \not\equivnode n_{2}$.

  Due to the additive nature of equality saturation, $\equivnode$ only increases
    and therefore $I$ is non-increasing.
  However, a call to \texttt{repair} inside the loop of \texttt{rebuild} does
    not necessarily shrink $I$.
  Some calls instead remove an element from the worklist but do not modify the
    \egraph at all.

  Let the set $W$ be the worklist of \eclasses to be processed by
    \texttt{repair};
  in \autoref{fig:rebuild-code}, $W$ corresponds to \texttt{self.worklist} plus
    the unprocessed portion of the \texttt{todo} local variable.
  We show that each call to \texttt{repair} decreases the tuple
    $(|I|, |W|)$ lexicographically until $(|I|, |W|) = (0, 0)$,
    and thus rebuilding terminates with $(\equivnode) = (\cong^*)$.


  Given an \eclass $c$ from $W$, \texttt{repair} examines $c$'s parents
    for congruent \enodes that are not yet in the same \eclass:
  \begin{itemize}
    \item If at least one pair of $c$'s parents are congruent,
          rebuilding merges each pair $(p_{1}$, $p_{2})$,
          which adds to $W$ but makes $I$ smaller by definition.
    \item If no such congruent pairs are found, do nothing.
          Then, $|W|$ is decreased by 1 since $c$ came from the
          worklist and \texttt{repair} did not add anything back.
  \end{itemize}

  Since $(|I|, |W|)$ decreases lexicographically,
    $|W|$ eventually reaches $0$, so \texttt{rebuild} terminates.
  Note that $W$ contains precisely those \eclasses that need to be
    ``upward merged'' to check for congruent parents.
  So, when $W$ is empty,
    \texttt{rebuild} has effectively performed upward merging.
  By~\citet[Chapter 7]{nelson}, $|I| = 0$.
Therefore, when rebuilding terminates, congruence is restored.

\end{proof}

\subsection{Rebuilding and Equality Saturation}
\label{sec:rebuilding-eqsat}

Rebuilding offers the choice of when to enforce the \egraph invariants,
  potentially saving work if deferred thanks to the deduplication of the
  worklist.
The client is responsible for rebuilding at a time that
  maximizes performance without limiting the application.

\begin{figure}
  \begin{subfigure}[t]{0.47\linewidth}
    \begin{lstlisting}[language=Python, gobble=6, numbers=left, basicstyle=\scriptsize\ttfamily]
      def equality_saturation(expr, rewrites):
        egraph = initial_egraph(expr)

        while not egraph.is_saturated_or_timeout():


          # reading and writing is mixed
          for rw in rewrites:
            for (subst, eclass) in egraph.ematch(rw.lhs):

              # in traditional equality saturation,
              # matches can be applied right away
              # because invariants are always maintained
              eclass2 = egraph.add(rw.rhs.subst(subst))
              egraph.merge(eclass, eclass2)

              # restore the invariants after each merge
              egraph.rebuild()

        return egraph.extract_best()
    \end{lstlisting}
    \caption{
      Traditional equality saturation alternates between searching and applying
      rules, and the \egraph maintains its invariants throughout.
    }
    \label{fig:eq-sat-code1}
  \end{subfigure}
  \hfill
  \begin{subfigure}[t]{0.47\linewidth}
    \begin{lstlisting}[language=Python, gobble=6, basicstyle=\scriptsize\ttfamily, numbers=left]
      def equality_saturation(expr, rewrites):
        egraph = initial_egraph(expr)

        while not egraph.is_saturated_or_timeout():
          matches = []

          # read-only phase, invariants are preserved
          for rw in rewrites:
            for (subst, eclass) in egraph.ematch(rw.lhs):
              matches.append((rw, subst, eclass))

          # write-only phase, temporarily break invariants
          for (rw, subst, eclass) in matches:
            eclass2 = egraph.add(rw.rhs.subst(subst))
            egraph.merge(eclass, eclass2)

          # restore the invariants once per iteration
          egraph.rebuild()

        return egraph.extract_best()
    \end{lstlisting}
    \caption{
      \egg splits equality saturation iterations into read and write phases.
      The \egraph invariants are not constantly maintained, but restored
      only at the end of each iteration by the \texttt{rebuild} method
      (\autoref{sec:rebuild}).
    }
    \label{fig:eq-sat-code2}
  \end{subfigure}

  \caption{
    Pseudocode for traditional and \egg's version of the equality saturation
    algorithm.
  }
  \label{fig:eq-sat-code}
\end{figure}

\egg provides a modified equality saturation algorithm to take advantage
  of rebuilding.
\autoref{fig:eq-sat-code} shows pseudocode for both traditional equality
  saturation and \egg's variant, which exhibits two key differences:
\begin{enumerate}
  \item Each iteration is split into a read phase, which searches for all the
        rewrite matches, and a write phase that applies those matches.\footnote
    {
      Although the original equality saturation paper~\cite{eqsat}
      does not have separate reading and writing phases,
      some \egraph implementations (like the one inside Z3~\cite{z3})
      do separate these phases as an implementation detail.
      Ours is the first algorithm to take advantage of this by deferring
      invariant maintenance.
    }
  \item Rebuilding occurs only once per iteration, at the end.
\end{enumerate}

\egg's separation of the read and write phases means that rewrites are truly
  unordered.
In traditional equality saturation, later rewrites in the given rewrite list are
  favored in the sense that they can ``see'' the results of earlier rewrites in
  the same iteration.
Therefore, the results depend on the order of the rewrite list
  if saturation is not reached (which is common on large rewrite lists or input
  expressions).
\egg's equality saturation algorithm is invariant to the order of the rewrite
  list.

Separating the read and write phases also allows \egg to safely defer rebuilding.
If rebuilding were deferred in the traditional equality saturation algorithm,
  rules later in the rewrite list would be searched against an \egraph with
  broken invariants.
Since congruence may not hold, there may be missing equivalences, resulting in
  missing matches.
These matches will be seen after the \texttt{rebuild} during the next iteration
  (if another iteration occurs), but the false reporting could impact metrics
  collection, rule scheduling,\footnotemark{} or saturation detection.
\footnotetext{
  An optimization introduced in \autoref{sec:rule-scheduling} that
  relies on an accurate count of how many times a rewrite was matched.
}

\subsection{Evaluating Rebuilding}
\label{sec:rebuild-eval}

To demonstrate that deferred rebuilding
  provides faster congruence closure than traditional upward merging,
  we modified \egg to call \texttt{rebuild} immediately after every \texttt{merge}.
This provides a one-to-one comparison of deferred rebuilding against the
  traditional approach, isolated
  from the many other factors that make \egg efficient: overall design
  and algorithmic differences, programming language performance, and other
  orthogonal performance improvements.

\begin{figure}
  \begin{subfigure}{0.49\linewidth}
    \includegraphics[height=5cm]{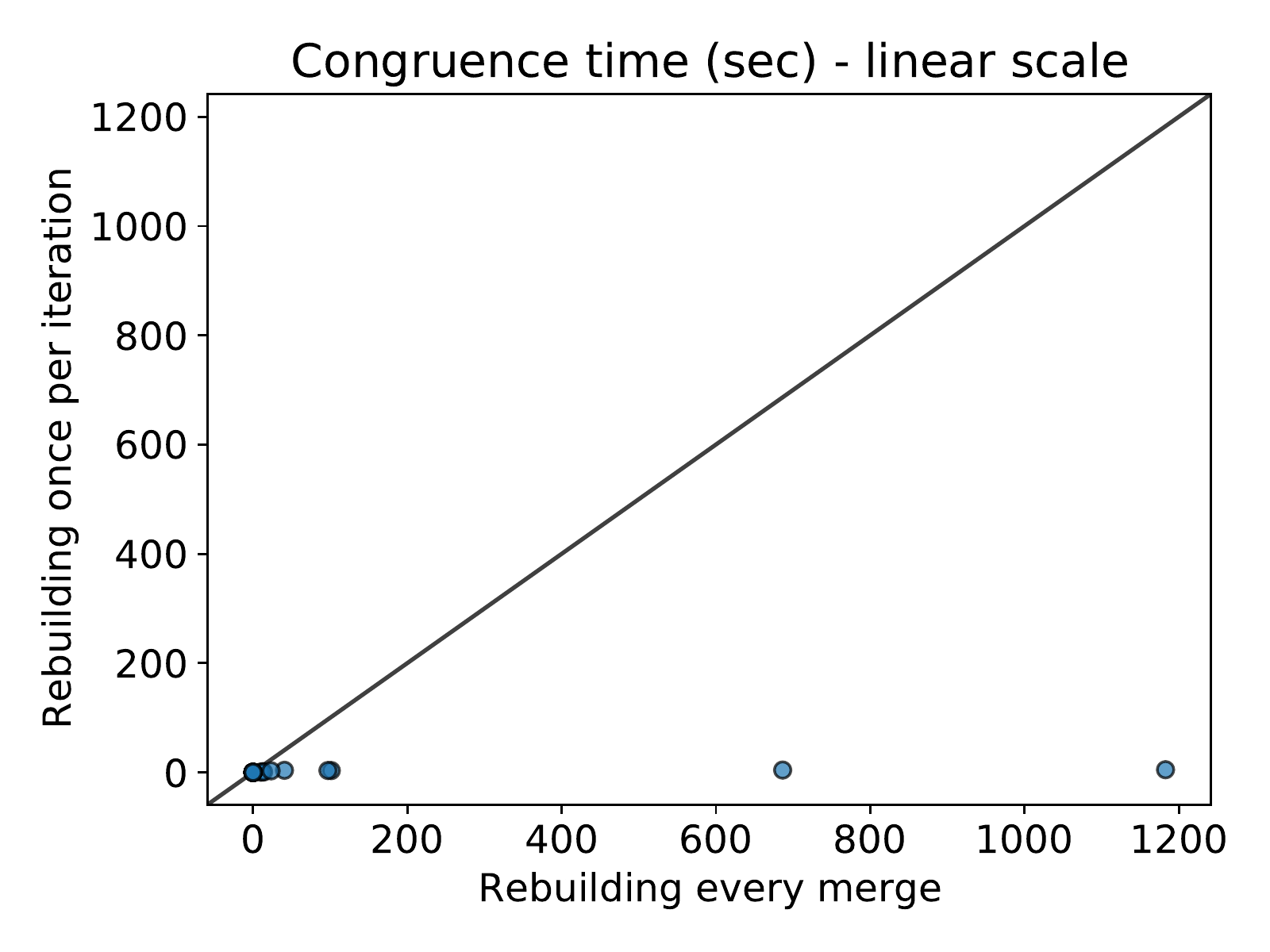}
  \end{subfigure}
  \hfill
  \begin{subfigure}{0.49\linewidth}
    \includegraphics[height=5cm]{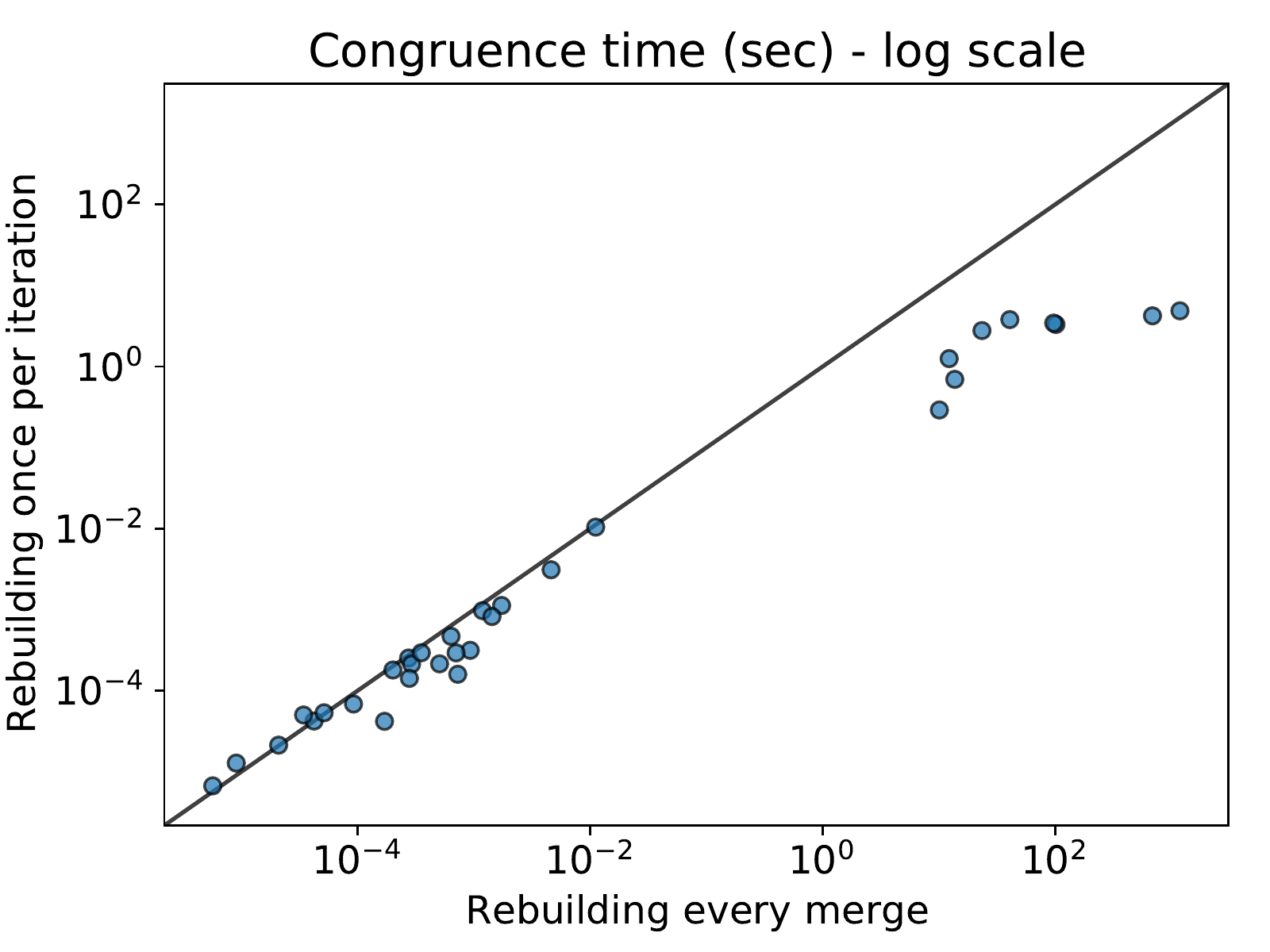}
  \end{subfigure}
  \caption{
    Rebuilding once per iteration---as opposed to after every merge---significantly
      speeds up congruence maintenance.
    Both plots show the same data: one point for each of the \nEggTests tests.
    The diagonal line is $y=x$;
      points below the line mean deferring rebuilding is faster.
    In aggregate over all tests (using geometric mean),
      congruence is \CongrSpeedup faster, and
      equality saturation is \TotalSpeedup faster.
    The linear scale plot shows that deferred rebuilding is significantly faster.
    The log scale plot suggests the speedup is greater than some constant multiple;
      \autoref{fig:eval-iter} demonstrates this in greater detail.
  }
  \label{fig:eval}

  \begin{minipage}[t]{0.48\linewidth}
  \includegraphics[height=5cm]{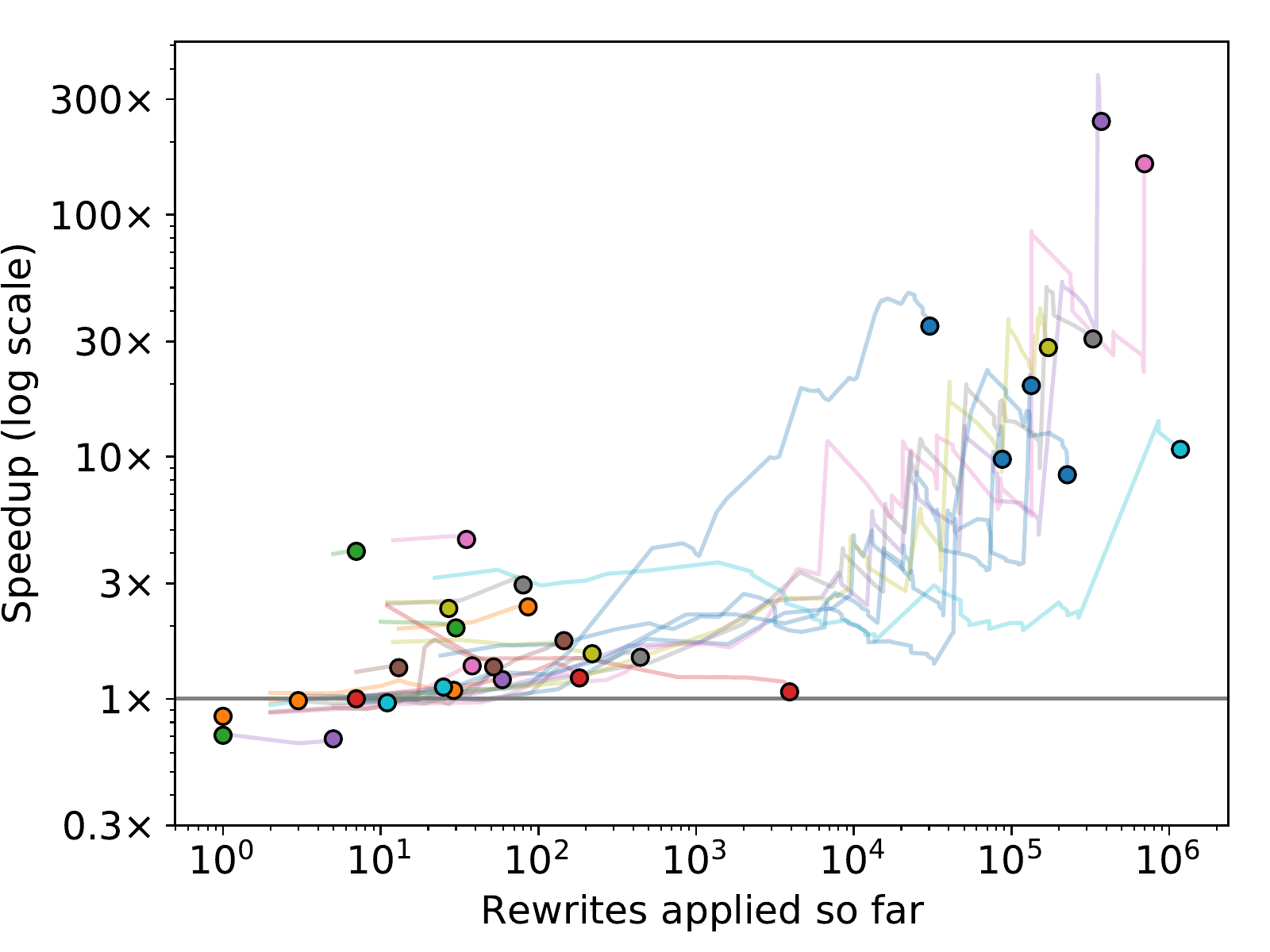}
  \caption{
    As more rewrites are applied, deferring rebuilding gives greater speedup.
    Each line represents a single test: each equality saturation iteration plots
      the cumulative rewrites applied so far against the multiplicative speedup
      of deferring rebuilding; the dot represents the end of that test.
    Both the test suite as a whole (the dots) and individual tests (the lines)
      demonstrate an asymptotic speedup that increases with
      the problem size.
  }
  \label{fig:eval-iter}
  \end{minipage}
  \hfill
  \begin{minipage}[t]{0.48\linewidth}
  \includegraphics[height=5cm]{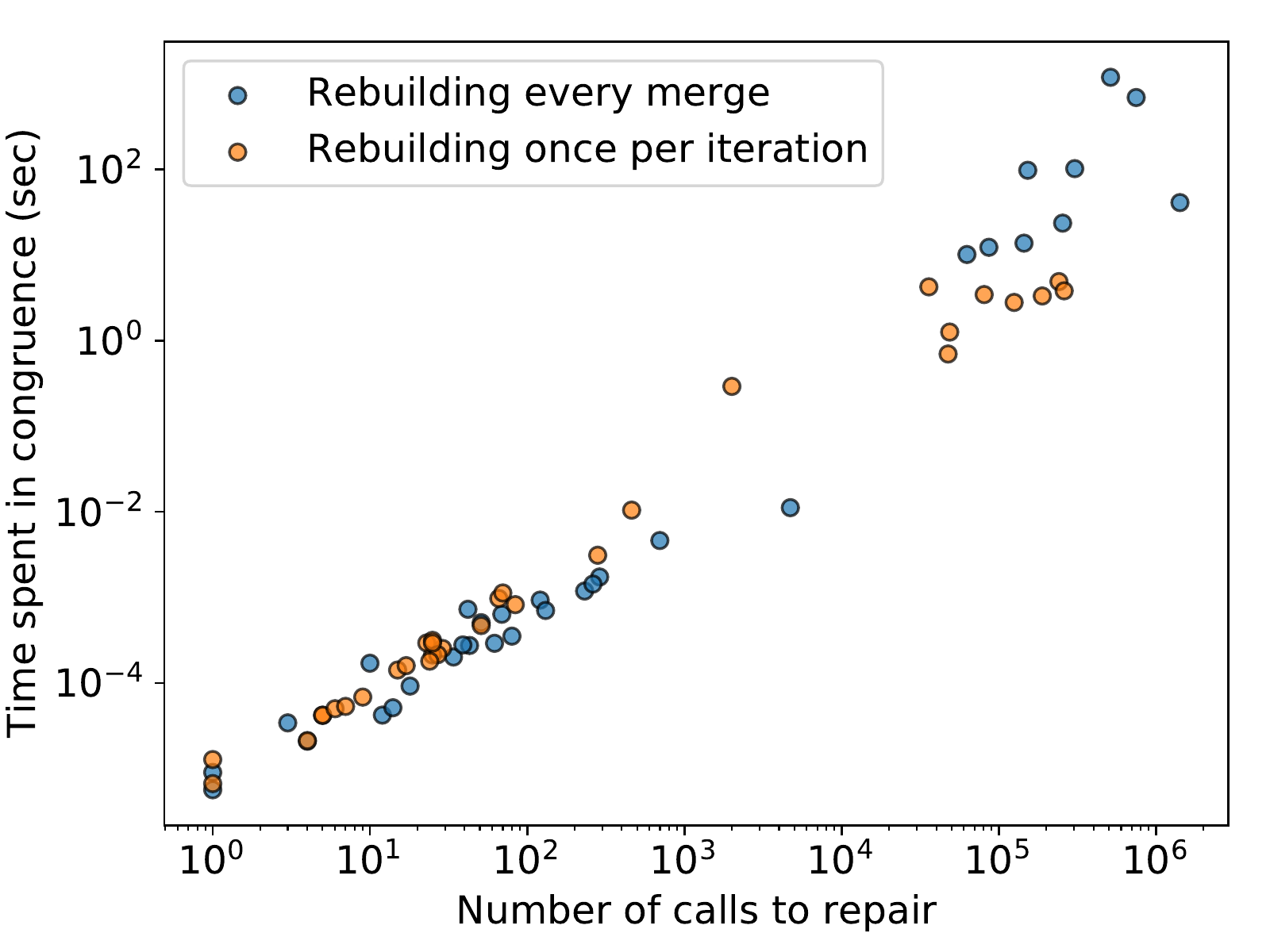}
  \caption{
    The time spent in congruence maintenance correlates with the number of calls
    to the \texttt{repair} method.
    Spearman correlation yields $r=\RepairsR$ with a p-value of \RepairsP,
    indicating that the two quantities are indeed positively correlated.
  }
  \label{fig:repair-plot}
  \end{minipage}
\end{figure}

We ran \egg's test suite using both rebuild strategies, measuring the time spent
  on congruence maintenance.
Each test consists of one run of \egg's equality saturation algorithm to optimize
  a given expression.
Of the \nEggTests total tests,
  \nEggTimeouts hit the iteration limit of 100 and the remainder saturated.
Note that both rebuilding strategies use \egg's phase-split equality saturation
  algorithm, and the resulting \egraphs are identical in all cases.
These experiments were performed on a 2020 Macbook Pro with a 2 GHz quad-core
  Intel Core i5 processor and 16GB of memory.

\autoref{fig:eval} shows our how rebuilding speeds up congruence maintenance.
Overall, our experiments show an aggregate \CongrSpeedup speedup on congruence
  closure and \TotalSpeedup speedup over the entire equality saturation
  algorithm.
\autoref{fig:eval-iter} shows this speedup is asymptotic;
  the multiplicative speedup increases as problem gets larger.

\egg's test suite consists of two main applications:
\texttt{math},
  a small computer algebra system capable of symbolic differentiation and
  integration; and
\texttt{lambda},
  a partial evaluator for the untyped lambda calculus using explicit
  substitution to handle variable binding (shown in \autoref{sec:impl}).
Both are typical \egg applications primarily driven by
  syntactic rewrites, with a few key uses of \egg's more complex features
  like \eclass analyses and dynamic/conditional rewrites.

\egg can be configured to capture various metrics about equality saturation as
  it runs, including the time spent in the read phase (searching for matches),
  the write phase (applying matches), and rebuilding.
In \autoref{fig:eval}, congruence time is measured as the time spent
  applying matches plus rebuilding.
Other parts of the equality saturation algorithm (creating the initial \egraph,
  extracting the final term) take negligible take compared to the equality
  saturation iterations.

Deferred rebuilding amortizes the examination of \eclasses
  for congruence maintenance;
  deduplicating the worklist reduces the number of calls to the \texttt{repair}.
\autoref{fig:repair-plot} shows that time spent in congruence is correlated with
  the number of calls to the \texttt{repair} methods.

The case study in \autoref{sec:herbie} provides a further evaluation of
  rebuilding. Rebuilding (and other \egg features) have also been implemented in
  a Racket-based \egraph, demonstrating that rebuilding is a conceptual advance
  that need not be tied to the \egg implementation.


\section{Extending \Egraphs with \Eclass Analyses}
\label{sec:extensions}

As discussed so far, \egraphs and equality saturation provide an efficient way
  to implement a term rewriting system.
Rebuilding enhances that efficiency, but the approach remains designed for
  purely syntactic rewrites.
However, program analysis and optimization typically require more than just
  syntactic information.
Instead, transformations are \emph{computed} based on the input terms and also semantic facts
  about that input term, e.g., constant value, free variables, nullability,
  numerical sign, size in memory, and so on.
The ``purely syntactic'' restriction has forced existing equality saturation
  applications~\cite{eqsat, eqsat-llvm, herbie} to
  resort to ad hoc passes over the \egraph
  to implement analyses like constant folding.
These ad hoc passes require manually manipulating the \egraph,
  the complexity of which could prevent the implementation of more sophisticated
  analyses.

We present a new technique called \textit{\eclass analysis},
  which allows the concise
  expression of a program analysis over the \egraph.
An \eclass analysis resembles abstract interpretation
  lifted to the \egraph level,
  attaching \textit{analysis data} from a semilattice to each \eclass.
The \egraph maintains and propagates this data as
  \eclasses get merged and new \enodes are added.
Analysis data can be used directly to modify the \egraph, to inform
  how or if rewrites apply their right-hand sides, or to determine the cost of
  terms during the extraction process.

\Eclass analyses provide a general mechanism to replace what previously
  required ad hoc extensions that manually manipulate the \egraph.
\Eclass analyses also fit within the equality saturation workflow,
  so they can naturally cooperate with the equational reasoning provided by
  rewrites.
Moreover, an analysis lifted to the \egraph level automatically benefits from a
  sort of ``partial-order reduction'' for free:
  large numbers of similar programs may be analyzed for little additional cost
  thanks to the \egraph's compact representation.

This section provides a conceptual explanation of \eclass analyses as well
  as dynamic and conditional rewrites that can use the analysis data.
The following sections will provide concrete examples:
  \autoref{sec:impl} discusses the \egg implementation and a complete example of a
  partial evaluator for the lambda calculus;
  \autoref{sec:case-studies} discusses how three published projects have used
  \egg and its unique features (like \eclass analyses).





\subsection{\Eclass Analyses}
\label{sec:analysis}

An \eclass analysis defines a domain $D$ and associates a value $d_{c} \in D$ to
  each \eclass $c$.
The \eclass $c$ contains the associated data $d_{c}$,
  i.e., given an \eclass $c$, one can get $d_{c}$ easily, but not vice-versa.

The interface of an \eclass analysis is as follows,
  where $G$ refers to the \egraph,
  and $n$ and $c$ refer to \enodes and \eclasses within $G$:


\vspace{1em}
\begin{tabular}{lp{0.7\linewidth}}
  $\textsf{make}(n) \to d_{c}$ &
    When a new \enode $n$ is added to $G$ into a new, singleton \eclass $c$,
    construct a new value $d_{c} \in D$ to be associated with $n$'s new \eclass,
    typically by accessing the associated data of $n$'s children.
  \\
  $\textsf{join}(d_{c_1}, d_{c_2}) \to d_{c}$ &
    When \eclasses $c_{1}, c_{2}$ are being merged into $c$,
    join $d_{c_1}, d_{c_2}$ into a new value $d_{c}$ to be associated with the
    new \eclass $c$.
  \\
  $\textsf{modify}(c) \to c'$ &
    Optionally modify the \eclass $c$ based on $d_{c}$, typically by adding an
      \enode to $c$.
    Modify should be idempotent if no other changes occur to the \eclass, i.e.,
      $\textsf{modify}(\textsf{modify}(c)) = \textsf{modify}(c)$
\end{tabular}
\vspace{1em}

The domain $D$ together with the \textsf{join} operation should form a join-semilattice.
The semilattice perspective is useful for defining the \textit{analysis invariant}
  (where $\wedge$ is the \textsf{join} operation):
\[
  \forall c \in G.\quad
  d_{c} = \bigwedge_{n \in c} \textsf{make}(n)
  \quad \text{and} \quad
  \textsf{modify}(c) = c
\]

The first part of the analysis invariant states that the data associated with
  each \eclass must be the \textsf{join} of the \textsf{make} for every \enode
  in that \eclass.
Since $D$ is a join-semilattice, this means that
  $\forall c, \forall n \in c, d_{c} \geq \textsf{make}(n) $.
The motivation for the second part is more subtle.
Since the analysis can modify an \eclass through the \textsf{modify} method,
  the analysis invariant asserts that these modifications are driven to a fixed
  point.
When the analysis invariant holds, a client looking at the analysis data can be
  assured that the analysis is ``stable'' in the sense that
  recomputing \textsf{make}, \textsf{join}, and \textsf{modify} will not
  modify the \egraph or any analysis data.

\subsubsection{Maintaining the Analysis Invariant}

\begin{figure}
  \begin{minipage}[t]{0.47\linewidth}
    \begin{lstlisting}[gobble=4, numbers=left, numberstyle=\color{black}, basicstyle=\scriptsize\ttfamily\color{black!40}, escapechar=|]
    def add(enode):
      enode = self.canonicalize(enode)
      if enode in self.hashcons:
        return self.hashcons[enode]
      else:
        eclass = self.new_singleton_eclass(enode)
        for child_eclass in enode.children:
          child_eclass.parents.add(enode, eclass)
        self.hashcons[enode] = eclass
        |\color{black}\label{line:add1} eclass.data = analysis.make(enode)|
        |\color{black}\label{line:add2} analysis.modify(eclass)|
        return eclass

    def merge(eclass1, eclass2)
      union = self.union_find.union(eclass1, eclass2)
      if not union.was_already_unioned:
        |\color{black}\label{line:merge1}d1, d2 = eclass1.data, eclass2.data|
        |\color{black}\label{line:merge2}union.eclass.data = analysis.join(d1, d2)|
        self.worklist.add(union.eclass)
      return union.eclass
    \end{lstlisting}
  \end{minipage}
  \hfill
  \begin{minipage}[t]{0.47\linewidth}
    \begin{lstlisting}[gobble=4, numbers=left, firstnumber=21, numberstyle=\color{black}, basicstyle=\scriptsize\ttfamily\color{black!40}, escapechar=|]
    def repair(eclass):
      for (p_node, p_eclass) in eclass.parents:
        self.hashcons.remove(p_node)
        p_node = self.canonicalize(p_node)
        self.hashcons[p_node] = self.find(p_eclass)

      new_parents = {}
      for (p_node, p_eclass) in eclass.parents:
        p_node = self.canonicalize(p_node)
        if p_node in new_parents:
          self.union(p_eclass, new_parents[p_node])
        new_parents[p_node] = self.find(p_eclass)
      eclass.parents = new_parents
    \end{lstlisting}
    \vspace{-3mm}
    \begin{lstlisting}[gobble=4, numbers=left, firstnumber=34, basicstyle=\scriptsize\ttfamily, escapechar=|]

      # any mutations modify makes to eclass
      # will add to the worklist
      |\label{line:repair1}|analysis.modify(eclass)
      for (p_node, p_eclass) in eclass.parents:
        new_data = analysis.join(
          p_eclass.data,
          analysis.make(p_node))
        if new_data != p_eclass.data:
          p_eclass.data = new_data
          |\label{line:repair2}|self.worklist.add(p_eclass)
    \end{lstlisting}
  \end{minipage}
  \caption{
    The pseudocode for maintaining the \eclass analysis invariant is largely
      similar to how rebuilding maintains congruence closure
      (\autoref{sec:rebuilding}).
    Only lines \ref{line:add1}--\ref{line:add2},
      \ref{line:merge1}--\ref{line:merge2},
      and \ref{line:repair1}--\ref{line:repair2} are added.
    Grayed out or missing code is unchanged from \autoref{fig:rebuild-code}.
  }
  \label{fig:rebuild-analysis}
\end{figure}

We extend the rebuilding procedure from \autoref{sec:rebuilding} to restore the
  analysis invariant as well as the congruence invariant.
\autoref{fig:rebuild-analysis} shows the necessary modifications to the
  rebuilding code from \autoref{fig:rebuild-code}.

Adding \enodes and merging \eclasses risk breaking the analysis invariant in
  different ways.
Adding \enodes is the simpler case; lines \ref{line:add1}--\ref{line:add2}
  restore the invariant for the newly created, singleton \eclass that holds the
  new \enode.
When merging \enodes, the first concern is maintaining the semilattice portion of the
  analysis invariant.
Since \textsf{join} forms a semilattice over the domain $D$ of the analysis
  data, the order in which the joins occur does not matter.
Therefore, line \ref{line:merge2} suffices to update the analysis data of the
  merged \eclass.

Since $\textsf{make}(n)$ creates analysis data by looking at the data of $n$'s,
  children, merging \eclasses can violate the analysis invariant in the same way
  it can violate the congruence invariant.
The solution is to use the same worklist mechanism introduced in
  \autoref{sec:rebuilding}.
Lines \ref{line:repair1}--\ref{line:repair2} of the \texttt{repair} method
  (which \texttt{rebuild} on each element of the worklist)
  re-\textsf{make} and \textsf{merge} the analysis data of the parent of any
  recently merged \eclasses.
The new \texttt{repair} method also calls \textsf{modify} once, which suffices
  due to its idempotence.
In the pseudocode, \textsf{modify} is reframed as a mutating method for clarity.

\Egg's implementation of \eclass analyses assumes that the analysis domain $D$
  is indeed a semilattice and that \textsf{modify} is idempotent.
Without these properties, \egg may fail to restore the analysis invariant on
  \texttt{rebuild}, or it may not terminate.

\subsubsection{Example: Constant Folding}

The data produced by \eclass analyses can be
  usefully consumed by other components of an equality saturation system
  (see \autoref{sec:rewrites}),
  but \eclass analyses can be useful on their own thanks to the
  \textsf{modify} hook.
Typical \textsf{modify} hooks will either do nothing, check some invariant about
  the \eclasses being merged, or add an \enode to that \eclass
  (using the regular \texttt{add} and \texttt{merge} methods of the \egraph).

As mentioned above, other equality saturation implementations have implemented
  constant folding as custom, ad hoc passes over the \egraph.
We can formulate constant folding as an \eclass analysis that highlights the
  parallels with abstract interpretation.
Let the domain $D = \texttt{Option<Constant>}$, and let the \texttt{join}
  operation be the ``\texttt{or}'' operation of the \texttt{Option} type:
\\
\begin{minipage}{\linewidth}
\begin{lstlisting}[language=Rust, basicstyle=\ttfamily\footnotesize, xleftmargin=35mm]
match (a, b) {
  (None,    None   ) => None,
  (Some(x), None   ) => Some(x),
  (None,    Some(y)) => Some(y),
  (Some(x), Some(y)) => { assert!(x == y); Some(x) }
}
\end{lstlisting}
\end{minipage}
\\
Note how \textsf{join} can also aid in debugging by checking properties about
  values that are unified in the \egraph;
  in this case we assert that all terms represented in an \eclass should have
  the same constant value.
The \textsf{make} operation serves as the abstraction function, returning the
  constant value of an \enode if it can be computed from the constant values
  associated with its children \eclasses.
The \textsf{modify} operation serves as a concretization function in this
  setting.
If $d_{c}$ is a constant value, then $\textsf{modify}(c)$ would add
  $\gamma(d_{c}) = n$ to $c$, where $\gamma$ concretizes the constant value into
  a childless \enode.

Constant folding is an admittedly simple analysis, but one that did not formerly
  fit within the equality saturation framework.
\Eclass analyses support more complicated analyses in a general way, as
  discussed in later sections on the \egg implementation and case studies
  (Sections \ref{sec:impl} and \ref{sec:case-studies}).

\subsection{Conditional and Dynamic Rewrites}
\label{sec:rewrites}

In equality saturation applications, most of the rewrites are purely
  syntactic.
In some cases, additional data may be needed to determine if or how to perform
  the rewrite.
For example, the rewrite $x / x \to 1$ is only valid if $x \neq 0$.
A more complex rewrite may need to compute the right-hand side dynamically based
  on an analysis fact from the left-hand side.

The right-hand side of a rewrite can be generalized to a function
  \textsf{apply} that takes a substitution and an \eclass generated from
  e-matching the left-hand side, and produces a term to be added to the \egraph
  and unified with the matched \eclass.
For a purely syntactic rewrite, the \textsf{apply} function need not inspect the
  matched \eclass in any way; it would simply apply
  the substitution to the right-hand pattern to produce a new term.

\Eclass analyses greatly increase the utility of this generalized form of
  rewriting.
The \textsf{apply} function can look at the analysis data for the matched
  \eclass or any of the \eclasses in the substitution to determine if or how to
  construct the right-hand side term.
These kinds of rewrites can broken down further into two categories:
\begin{itemize}
  \item \textit{Conditional} rewrites like $x / x \to 1$ that are purely
  syntactic but whose validity depends on checking some analysis data;
  \item \textit{Dynamic} rewrites that compute the right-hand side based on
  analysis data.
\end{itemize}

Conditional rewrites are a subset of the more general dynamic rewrites.
Our \egg implementation supports both.
The example in \autoref{sec:impl} and case studies in \autoref{sec:case-studies}
  heavily use generalized rewrites, as it is typically the most convenient way
  to incorporate domain knowledge into the equality saturation
  framework.

\subsection{Extraction}
\label{sec:tricks-extraction}

Equality saturation typically ends with an extraction phase that selects an
  optimal represented term from an \eclass according to some cost function.
In many domains \cite{herbie, szalinski}, AST size
  (sometimes weighted differently for different operators) suffices as a simple,
  local cost function.
We say a cost function $k$ is local when the cost of a term $f(a_{1}, ...)$ can be
  computed from the function symbol $f$ and the costs of the children.
With such cost functions, extracting an optimal term can be efficiently done
  with a fixed-point traversal over the \egraph that selects the minimum cost
  \enode from each \eclass \cite{herbie}.

Extraction can be formulated as an \eclass analysis when the cost function
  is local.
The analysis data is a tuple $(n, k(n))$ where $n$ is the cheapest \enode
  in that \eclass and $k(n)$ its cost.
The $\textsf{make}(n)$ operation calculates the cost $k(n)$ based on
  the analysis data (which contain the minimum costs) of $n$'s children.
The \textsf{merge} operation simply takes the tuple with lower cost.
The semilattice portion of the analysis invariant then guarantees that the
  analysis data will contain the lowest-cost \enode in each class.
Extract can then proceed recursively;
  if the analysis data for \eclass $c$ gives $f(c_{1}, c_{2}, ...)$ as the optimal \enode,
  the optimal term represented in $c$ is
  $\textsf{extract}(c) = f( \textsf{extract}(c_{1}), \textsf{extract}(c_{2}), ... )$.
This not only further demonstrates the generality of \eclass analyses, but also
  provides the ability to do extraction ``on the fly''; conditional and dynamic
  rewrites can determine their behavior based on the cheapest term in an \eclass.

Extraction (whether done as a separate pass or an \eclass analysis) can also
  benefit from the analysis data.
Typically, a local cost function can only look at the function symbol of the
  \enode $n$ and the costs of $n$'s children.
When an \eclass analysis is attached to the \egraph, however, a cost function
  may observe the data associated with $n$'s \eclass, as well as the data
  associated with $n$'s children.
This allows a cost function to depend on computed facts rather that just purely
  syntactic information.
In other words, the cost of an operator may differ based on its inputs.
\autoref{sec:spores} provides a motivating case study wherein an \eclass
  analysis computes the size and shape of tensors, and this size information
  informs the cost function.


\section{\egg: Easy, Extensible, and Efficient \Egraphs}

\label{sec:egg}
\label{sec:impl}
\label{sec:lambda}

We implemented the techniques of rebuilding and \eclass analysis in \egg,
  an easy-to-use, extensible, and efficient \egraph library.
To the best of our knowledge,
  \egg is the first general-purpose, reusable \egraph implementation.
This has allowed focused effort on ease of use and optimization,
  knowing that any benefits will
  be seen across use cases as opposed to a single, ad hoc instance.

This section details \egg's implementation and some of the various
  optimizations and tools it provides to the user.
We use an extended example of a partial evaluator for the lambda calculus\footnote{
  \Egraphs do not have any ``built-in'' support for binding;
  for example, equality modulo alpha renaming is not free.
  The explicit substitution provided in this section is is illustrative but rather high in performance cost.
  Better support for languages with binding is important future work.
},
  for which we provide the complete source code (which few changes for readability)
  in \autoref{fig:lambda-lang} and \autoref{fig:lambda-analysis}.
While contrived, this example is compact and familiar, and it highlights
  (1) how \egg is used and (2) some of its novel features like
  \eclass analyses and dynamic rewrites.
It demonstrates how \egg can tackle binding,
  a perennially tough problem for \egraphs,
  with a simple explicit substitution approach
  powered by \egg's extensibility.
\autoref{sec:case-studies} goes further, providing real-world case studies of
  published projects that have depended on \egg.

\egg is implemented in \textasciitilde{}5000 lines of Rust,\footnote
{
  \citeauthor{rust} is a high-level systems programming language.
  \egg has been integrated into applications written in other
  programming languages using both C FFI and serialization approaches.
}
including code, tests, and documentation.
\egg is open-source, well-documented, and distributed via Rust's package
  management system.\footnote{
  Source: \url{https://github.com/mwillsey/egg}.
  Documentation: \url{https://docs.rs/egg}.
  Package: \url{https://crates.io/crates/egg}.
}
All of \egg's components are generic over the
  user-provided language, analysis, and cost functions.

\subsection{Ease of Use}
\label{sec:egg-easy}

\begin{figure}
\begin{subfigure}[t]{0.48\linewidth}
  \begin{lstlisting}[language=Rust, basicstyle=\tiny\ttfamily, numbers=left, escapechar=|]
define_language! {
  enum Lambda {
    // enum variants have data or children (eclass Ids)
    // [Id; N] is an array of N `Id`s

    // base type operators
    "+" = Add([Id; 2]), "=" = Eq([Id; 2]),
    "if" = If([Id; 3]),

    // functions and binding
    "app" = App([Id; 2]), "lam" = Lambda([Id; 2]),
    "let" = Let([Id; 3]), "fix" = Fix([Id; 2]),

    // (var x) is a use of `x` as an expression
    "var" = Use(Id),
    // (subst a x b) substitutes a for (var x) in b
    "subst" = Subst([Id; 3]),

    // base types have no children, only data
    Bool(bool), Num(i32), Symbol(String),
  }
}

// example terms and what they simplify to
// pulled directly from the |\egg|test suite

test_fn! { lambda_under, rules(),
  "(lam x (+ 4 (app (lam y (var y)) 4)))"
  => "(lam x 8))",
}

test_fn! { lambda_compose_many, rules(),
  "(let compose (lam f (lam g (lam x
                (app (var f)
                     (app (var g) (var x))))))
   (let add1 (lam y (+ (var y) 1))
   (app (app (var compose) (var add1))
        (app (app (var compose) (var add1))
             (app (app (var compose) (var add1))
                  (app (app (var compose) (var add1))
                       (var add1)))))))"
  => "(lam ?x (+ (var ?x) 5))"
}

test_fn! { lambda_if_elim, rules(),
  "(if (= (var a) (var b))
       (+ (var a) (var a))
       (+ (var a) (var b)))"
  => "(+ (var a) (var b))"
}\end{lstlisting}
\end{subfigure}
\hfill
\begin{subfigure}[t]{0.48\linewidth}
  \begin{lstlisting}[language=Rust, basicstyle=\tiny\ttfamily, escapechar=|, numbers=left, firstnumber=51]
// Returns a list of rewrite rules
fn rules() -> Vec<Rewrite<Lambda, LambdaAnalysis>> { vec![

 // open term rules
 rw!("if-true";  "(if  true ?then ?else)" => "?then"),
 rw!("if-false"; "(if false ?then ?else)" => "?else"),
 rw!("if-elim";  "(if (= (var ?x) ?e) ?then ?else)" => "?else"
     if ConditionEqual::parse("(let ?x ?e ?then)",
                              "(let ?x ?e ?else)")),
 rw!("add-comm";  "(+ ?a ?b)"        => "(+ ?b ?a)"),
 rw!("add-assoc"; "(+ (+ ?a ?b) ?c)" => "(+ ?a (+ ?b ?c))"),
 rw!("eq-comm";   "(= ?a ?b)"        => "(= ?b ?a)"),

 // substitution introduction
 rw!("fix";     "(fix ?v ?e)" =>
                "(let ?v (fix ?v ?e) ?e)"),
 rw!("beta";    "(app (lam ?v ?body) ?e)" =>
                "(let ?v ?e ?body)"),

 // substitution propagation
 rw!("let-app"; "(let ?v ?e (app ?a ?b))" =>
                "(app (let ?v ?e ?a) (let ?v ?e ?b))"),
 rw!("let-add"; "(let ?v ?e (+   ?a ?b))" =>
                "(+   (let ?v ?e ?a) (let ?v ?e ?b))"),
 rw!("let-eq";  "(let ?v ?e (=   ?a ?b))" =>
                "(=   (let ?v ?e ?a) (let ?v ?e ?b))"),
 rw!("let-if";  "(let ?v ?e (if ?cond ?then ?else))" =>
                "(if (let ?v ?e ?cond)
                     (let ?v ?e ?then)
                     (let ?v ?e ?else))"),

 // substitution elimination
 rw!("let-const";    "(let ?v ?e ?c)" => "?c"
     if is_const(var("?c"))),
 rw!("let-var-same"; "(let ?v1 ?e (var ?v1))" => "?e"),
 rw!("let-var-diff"; "(let ?v1 ?e (var ?v2))" => "(var ?v2)"
     if is_not_same_var(var("?v1"), var("?v2"))),
 rw!("let-lam-same"; "(let ?v1 ?e (lam ?v1 ?body))" =>
                     "(lam ?v1 ?body)"),
 rw!("let-lam-diff"; "(let ?v1 ?e (lam ?v2 ?body))" =>
     ( CaptureAvoid {
        fresh: var("?fresh"), v2: var("?v2"), e: var("?e"),
        if_not_free: "(lam ?v2 (let ?v1 ?e ?body))"
                     .parse().unwrap(),
        if_free: "(lam ?fresh (let ?v1 ?e
                              (let ?v2 (var ?fresh) ?body)))"
                 .parse().unwrap(),
     })
     if is_not_same_var(var("?v1"), var("?v2"))),
]}\end{lstlisting}
\end{subfigure}
\caption[Language and rewrites for the lambda calculus in \egg]{
\egg is generic over user-defined languages;
  here we define a language and rewrite rules for a lambda calculus partial evaluator.
The provided \texttt{define\_language!} macro (lines 1-22) allows the simple definition
  of a language as a Rust \texttt{enum}, automatically deriving parsing and
  pretty printing.
A value of type \texttt{Lambda} is an \enode that holds either data that the
  user can inspect or some number of \eclass children (\eclass \texttt{Id}s).

Rewrite rules can also be defined succinctly (lines 51-100).
Patterns are parsed as s-expressions:
  strings from the \texttt{define\_language!} invocation (ex: \texttt{fix}, \texttt{=}, \texttt{+}) and
  data from the variants (ex: \texttt{false}, \texttt{1}) parse as operators or terms;
  names prefixed by ``\texttt{?}'' parse as pattern variables.

Some of the rewrites made are conditional using the
  ``\texttt{left => right if cond}''
  syntax.
The \texttt{if-elim} rewrite on line 57 uses \egg's provided
  \texttt{ConditionEqual} as a condition, only applying the right-hand side
  if the \egraph can prove the two argument patterns equivalent.
The final rewrite, \texttt{let-lam-diff}, is dynamic to support capture avoidance;
  the right-hand side is a Rust value that
  implements the \texttt{Applier} trait instead of a pattern.
\autoref{fig:lambda-analysis} contains the supporting code for these rewrites.

We also show some of the tests (lines 27-50)
  from \egg's \texttt{lambda} test suite.
The tests proceed by inserting the term on the left-hand side, running
  \egg's equality saturation, and then checking to make sure the right-hand
  pattern can be found in the same \eclass as the initial term.
}
\label{fig:lambda-rules}
\label{fig:lambda-lang}
\label{fig:lambda-examples}
\end{figure}


\egg's ease of use comes primarily from its design as a library.
By defining only a language and some rewrite rules,
  a user can quickly
  start developing a synthesis or optimization tool.
Using \egg as a Rust library,
  the user defines the language using the \texttt{define\_language!} macro
  shown in \autoref{fig:lambda-lang}, lines 1-22.
Childless variants in the language may contain data of user-defined types,
  and \eclass analyses or dynamic rewrites may inspect this data.


The user provides rewrites as shown in
  \autoref{fig:lambda-lang}, lines 51-100.
Each rewrite has a name, a left-hand side, and a right-hand side.
For purely syntactic rewrites, the right-hand is simply a pattern.
More complex rewrites can incorporate conditions or even dynamic right-hand
  sides, both explained in the \autoref{sec:egg-extensible} and \autoref{fig:lambda-applier}.

Equality saturation workflows, regardless of the application domain,
  typically have a similar structure:
add expressions to an empty \egraph, run rewrites until saturation or
  timeout, and extract the best equivalent expressions according to some cost
  function.
This ``outer loop'' of equality saturation involves a significant amount of
  error-prone boilerplate:
\begin{itemize}
  \item Checking for saturation, timeouts, and \egraph size limits.
  \item Orchestrating the read-phase, write-phase, rebuild system
    (\autoref{fig:rebuild-code}) that makes \egg fast.
  \item Recording performance data at each iteration.
  \item Potentially coordinating rule execution so that expansive rules like
    associativity do not dominate the \egraph.
  \item Finally, extracting the best expression(s) according to a
  user-defined cost function.
\end{itemize}

\egg provides these functionalities through its \texttt{Runner} and
  \texttt{Extractor} interfaces.
\texttt{Runner}s automatically detect saturation, and can be configured to stop
  after a time, \egraph size, or iterations limit.
The equality saturation loop provided by \egg calls \texttt{rebuild}, so users
  need not even know about \egg's deferred invariant maintenance.
\texttt{Runner}s record various metrics about each iteration automatically,
  and the user can hook into this to report relevant data.
\texttt{Extractor}s select the optimal term from an \egraph given a
  user-defined, local cost function.\footnote{
    As mentioned in \autoref{sec:tricks-extraction}, extraction can be
    implemented as part of an \eclass analysis.
    The separate \texttt{Extractor} feature is still useful for ergonomic and
    performance reasons.
  }
The two can be combined as well; users commonly record the ``best so far''
  expression by extracting in each iteration.

\autoref{fig:lambda-lang} also shows \egg's \texttt{test\_fn!}
  macro for easily creating tests (lines 27-50).
These tests create an \egraph with the given expression, run equality saturation
  using a \texttt{Runner}, and check to make sure the right-hand pattern can be
  found in the same \eclass as the initial expression.

\subsection{Extensibility}
\label{sec:egg-extensible}

For simple domains, defining a language and purely syntactic rewrites will
  suffice.
However, our partial evaluator requires interpreted reasoning, so we use some of
  \egg's more advanced features like \eclass analyses and dynamic rewrites.
Importantly, \egg supports these extensibility features as a library:
  the user need not modify the \egraph or \egg's internals.

\begin{figure}
\begin{minipage}[t]{0.49\linewidth}
  \begin{lstlisting}[language=Rust, basicstyle=\tiny\ttfamily, numbers=left]
type EGraph = egg::EGraph<Lambda, LambdaAnalysis>;
struct LambdaAnalysis;
struct FC {
  free: HashSet<Id>,    // our analysis data stores free vars
  constant: Option<Lambda>, // and the constant value, if any
}

// helper function to make pattern meta-variables
fn var(s: &str) -> Var { s.parse().unwrap() }

impl Analysis<Lambda> for LambdaAnalysis {
  type Data = FC; // attach an FC to each eclass
  // merge implements semilattice join by joining into `to`
  // returning true if the `to` data was modified
  fn merge(&self, to: &mut FC, from: FC) -> bool {
    let before_len = to.free.len();
    // union the free variables
    to.free.extend(from.free.iter().copied());
    if to.constant.is_none() && from.constant.is_some() {
      to.constant = from.constant;
      true
    } else {
      before_len != to.free.len()
    }
  }

  fn make(egraph: &EGraph, enode: &Lambda) -> FC {
    let f = |i: &Id| egraph[*i].data.free.iter().copied();
    let mut free = HashSet::default();
    match enode {
      Use(v) => { free.insert(*v); }
      Let([v, a, b]) => {
        free.extend(f(b)); free.remove(v); free.extend(f(a));
      }
      Lambda([v, b]) | Fix([v, b]) => {
        free.extend(f(b)); free.remove(v);
      }
      _ => enode.for_each_child(
             |c| free.extend(&egraph[c].data.free)),
    }
    FC { free: free, constant: eval(egraph, enode) }
  }

  fn modify(egraph: &mut EGraph, id: Id) {
    if let Some(c) = egraph[id].data.constant.clone() {
      let const_id = egraph.add(c);
      egraph.union(id, const_id);
    }
  }
}\end{lstlisting}
\end{minipage}
\hfill
\begin{minipage}[t]{0.46\linewidth}
  \begin{lstlisting}[language=Rust, basicstyle=\tiny\ttfamily, escapechar=@, numbers=left, firstnumber=51]
// evaluate an enode if the children have constants
// Rust's `?` extracts an Option, early returning if None
fn eval(eg: &EGraph, enode: &Lambda) -> Option<Lambda> {
  let c = |i: &Id| eg[*i].data.constant.clone();
  match enode {
    Num(_) | Bool(_) => Some(enode.clone()),
    Add([x, y]) => Some(Num(c(x)? + c(y)?)),
    Eq([x, y]) => Some(Bool(c(x)? == c(y)?)),
    _ => None,
  }
}

// Functions of this type can be conditions for rewrites
trait ConditionFn = Fn(&mut EGraph, Id, &Subst) -> bool;

// The following two functions return closures of the
// correct signature to be used as conditions in @\autoref{fig:lambda-rules}@.
fn is_not_same_var(v1: Var, v2: Var) -> impl ConditionFn {
    |eg, _, subst| eg.find(subst[v1]) != eg.find(subst[v2])
}
fn is_const(v: Var) -> impl ConditionFn {
     // check the LambdaAnalysis data
    |eg, _, subst| eg[subst[v]].data.constant.is_some()
}

struct CaptureAvoid {
  fresh: Var, v2: Var, e: Var,
  if_not_free: Pattern<Lambda>, if_free: Pattern<Lambda>,
}

impl Applier<Lambda, LambdaAnalysis> for CaptureAvoid {
  // Given the egraph, the matching eclass id, and the
  // substitution generated by the match, apply the rewrite
  fn apply_one(&self, egraph: &mut EGraph,
               id: Id, subst: &Subst) -> Vec<Id>
  {
    let (v2, e) = (subst[self.v2], subst[self.e]);
    let v2_free_in_e = egraph[e].data.free.contains(&v2);
    if v2_free_in_e {
      let mut subst = subst.clone();
      // make a fresh symbol using the eclass id
      let sym = Lambda::Symbol(format!("_{}", id).into());
      subst.insert(self.fresh, egraph.add(sym));
      // apply the given pattern with the modified subst
      self.if_free.apply_one(egraph, id, &subst)
    } else {
      self.if_not_free.apply_one(egraph, id, &subst)
    }
  }
}\end{lstlisting}
\end{minipage}
\caption[\Eclass analysis and conditional/dynamic rewrites for the lambda calculus]{
Our partial evaluator example highlights three important features \egg provides
  for extensibility: \eclass analyses, conditional rewrites, and dynamic
  rewrites.
  
The \texttt{LambdaAnalysis} type, which implements the \texttt{Analysis} trait,
  represents the \eclass analysis.
Its associated data (\texttt{FC}) stores
  the constant term from that \eclass (if any) and
  an over-approximation of the free variables used by terms in that \eclass.
The constant term is used to perform constant folding.
The \texttt{merge} operation implements the semilattice join, combining the free
  variable sets and taking a constant if one exists.
In \texttt{make}, the analysis computes the free variable sets based on the
  \enode and the free variables of its children;
  the \texttt{eval} generates the new constants if possible.
The \texttt{modify} hook of \texttt{Analysis} adds the constant to the \egraph.

Some of the conditional rewrites in \autoref{fig:lambda-rules} depend on
  conditions defined here.
Any function with the correct signature may serve as a condition.

The \texttt{CaptureAvoid} type implements the \texttt{Applier} trait, allowing
  it to serve as the right-hand side of a rewrite.
\texttt{CaptureAvoid} takes two patterns and some pattern variables.
It checks the free variable set to determine if a capture-avoiding substitution
  is required, applying the \texttt{if\_free} pattern if so and the
  \texttt{if\_not\_free} pattern otherwise.
}
\label{fig:lambda-applier}
\label{fig:lambda-analysis}
\end{figure}


\autoref{fig:lambda-applier} shows the remainder of the code for our lambda
  calculus partial evaluator.
It uses an \eclass analysis (\texttt{LambdaAnalysis})
  to track free variables and constants associated
  with each \eclass.
The implementation of the \eclass analysis is in Lines 11-50.
The \eclass analysis invariant
  guarantees that the analysis data contains an over-approximation of free variables
  from terms represented in that \eclass.
The analysis also does constant folding
  (see the \texttt{make} and \texttt{modify} methods).
The \texttt{let-lam-diff} rewrite (Line 90, \autoref{fig:lambda-rules})
  uses the \texttt{CaptureAvoid} (Lines 81-100, \autoref{fig:lambda-applier})
  dynamic right-hand side to do capture-avoiding
  substitution only when necessary based on the free variable information.
The conditional rewrites from \autoref{fig:lambda-rules} depend on the
  conditions \texttt{is\_not\_same\_var} and
  \texttt{is\_var} (Lines 68-74, \autoref{fig:lambda-applier})
  to ensure correct substitution.

\egg is extensible in other ways as well.
As mentioned above, \texttt{Extractor}s are parameterized by a user-provided
  cost function.
\texttt{Runner}s are also extensible with user-provided rule schedulers that can
  control the behavior of potentially troublesome rewrites.
\label{sec:rule-scheduling}
In typical equality saturation, each rewrite is searched for and applied each
  iteration.
This can cause certain rewrites, commonly associativity or distributivity,
  to dominate others and make the search space less productive.
Applied in moderation, these rewrites can trigger other rewrites and find
  greatly improved expressions,
  but they can also slow the search by
  exploding the \egraph exponentially in size.
By default, \egg uses the built-in backoff scheduler
  that identifies rewrites that are matching in exponentially-growing
  locations and temporarily bans them.
We have observed that this greatly reduced run time (producing the same results)
  in many settings.
\egg can also use a conventional every-rule-every-time scheduler, or the user
  can supply their own.

\subsection{Efficiency}
\label{sec:egg-efficient}

\egg's novel \textit{rebuilding} algorithm (\autoref{sec:rebuild})
combined with systems programming best practices
  makes \egraphs---and the equality saturation
  use case in particular---more efficient than prior tools.

\egg is implemented in Rust, giving the compiler freedom to
  specialize and inline user-written code.
This is especially important as
  \egg's generic nature leads to tight interaction
  between library code
  (e.g., searching for rewrites) and user code (e.g., comparing operators).
\egg is designed from the ground up to use cache-friendly,
  flat buffers with minimal indirection for most internal data structures.
This is in sharp contrast to traditional representations of \egraphs
  \cite{nelson, simplify} that contains many tree- and linked list-like data
  structures.
\egg additionally compiles patterns to be executed by a small virtual machine
  \cite{ematching}, as opposed to recursively walking the tree-like
  representation of patterns.

Aside from deferred rebuilding, \egg's equality saturation algorithm leads to
  implementation-level performance enhancements.
Searching for rewrite matches, which is the bulk of running time, can be
  parallelized thanks to the phase separation.
Either the rules or \eclasses could be searched in parallel.
Furthermore, the once-per-iteration frequency of rebuilding allows \egg to
  establish other performance-enhancing invariants that hold during the
  read-only search phase.
For example, \egg sorts \enodes within each \eclass to enable binary search, and
  also maintains a cache mapping function symbols to \eclasses that
  contain \enodes with that function symbol.

Many of \egg's extensibility features can also be used to improve performance.
As mentioned above, rule scheduling can lead to great performance improvement in
  the face of ``expansive'' rules that would otherwise dominate the search
  space.
The \texttt{Runner} interface also supports user hooks that can stop
  the equality saturation after some arbitrary condition.
This can be useful when using equality saturation to prove terms equal; once
  they are unified, there is no point in continuing.
\label{sec:egg-batched}
\egg's \texttt{Runner}s also support batch simplification, where multiple terms
  can be added to the initial \egraph before running equality saturation.
If the terms are substantially similar, both rewriting and any \eclass analyses
  will benefit from the \egraph's inherent structural deduplication.
The case study in \autoref{sec:herbie} uses batch simplification to achieve
  a large speedup with simplifying similar expressions.


\section{Case Studies}
\label{sec:case-studies}

This section relates three independently-developed, published projects from diverse domains
  that incorporated \egg
  as an easy-to-use, high-performance \egraph implementation.
In all three cases, the developers had first rolled their own \egraph
  implementations.
\Egg allowed them to delete code, gain performance, and in some cases
  dramatically broaden the project's scope thanks to \egg's speed and
  flexibility.
In addition to gaining performance, all three projects use \egg's novel
  extensibility features like \eclass analyses and dynamic/conditional rewrites.

\subsection{Herbie: Improving Floating Point Accuracy}
\label{sec:herbie}

Herbie automatically improves accuracy
  for floating-point expressions,
  using random sampling to measure error,
  a set of rewrite rules for generating program variants,
  and algorithms that prune and combine program variants
  to achieve minimal error.
Herbie received PLDI 2015's Distinguished Paper award~\cite{herbie}
  and has been continuously developed since then,
  sporting hundreds of Github stars, hundreds of downloads,
  and thousands of users on its online version.
Herbie uses \egraphs for algebraic simplification of mathematical expressions,
  which is especially important for avoiding floating-point errors
  introduced by cancellation, function inverses, and redundant computation.

Until our case study,
  Herbie used a custom \egraph implementation
  written in Racket (Herbie's implementation language)
  that closely followed traditional \egraph implementations.
With timeouts disabled,
  \egraph-based simplification consumed
  the vast majority of Herbie's run time.
As a fix, Herbie sharply limits the simplification process,
  placing a size limit on the \egraph itself and a time limit on the whole
  procedure.
When the timeout is exceeded, simplification fails altogether.
Furthermore, the Herbie authors knew of several features
  that they believed would improve Herbie's output
  but could not be implemented because
  they required more calls to simplification
  and would thus introduce unacceptable slowdowns.
Taken together, slow simplification reduced Herbie's performance, completeness,
  and efficacy.

We implemented a \egg simplification backend for Herbie.
The \egg backend is over $3000\times$ faster than Herbie's initial simplifier and
  is now used by default as of Herbie 1.4.
Herbie has also backported some of \egg's features like batch simplification and
  rebuilding to its \egraph implementation
  (which is still usable, just not the default),
  demonstrating the portability of \egg's conceptual improvements.

\subsubsection{Implementation}

Herbie is implemented in Racket while \egg is in Rust;
  the \egg simplification backend is thus implemented as a Rust library that
  provides a C-level API for Herbie to access via foreign-function interface (FFI).
The Rust library defines the Herbie expression grammar
  (with named constants, numeric constants, variables, and operations)
  as well as the \eclass analysis necessary to do constant folding.
The library is implemented in under 500 lines of Rust.

Herbie's set of rewrite rules is not fixed;
  users can select which rewrites to use using command-line flags.
Herbie serializes the rewrites to strings,
  and the \egg backend parses and instantiates them on the Rust side.

Herbie separates exact and inexact program constants:
  exact operations on exact constants
  (such as the addition of two rational numbers)
  are evaluated and added to the \egraph,
  while operations on inexact constants or that yield inexact outputs
  are not.
We thus split numeric constants in the Rust-side grammar
  between exact rational numbers and inexact constants,
  which are described by an opaque identifier,
  and transformed Racket-side expressions into this form
  before serializing them and passing them to the Rust driver.
To evaluate operations on exact constants,
  we used the constant folding \eclass analysis
  to track the ``exact value'' of each \eclass.%
\footnote{Herbie's rewrite rules guarantee that different exact values
  can never become equal; the semilattice \textsf{join} checks this invariant on the Rust side.}
Every time an operation \enode is added to the \egg \egraph,
  we check whether all arguments to that operation have exact value (using the analysis data),
  and if so do rational number arithmetic to evaluate it.
The \eclass analysis is cleaner than the corresponding code in Herbie's implementation,
  which is a built-in pass over the entire \egraph.

\subsubsection{Results}

\begin{figure}
  \centering
  \includegraphics[height=5.5cm]{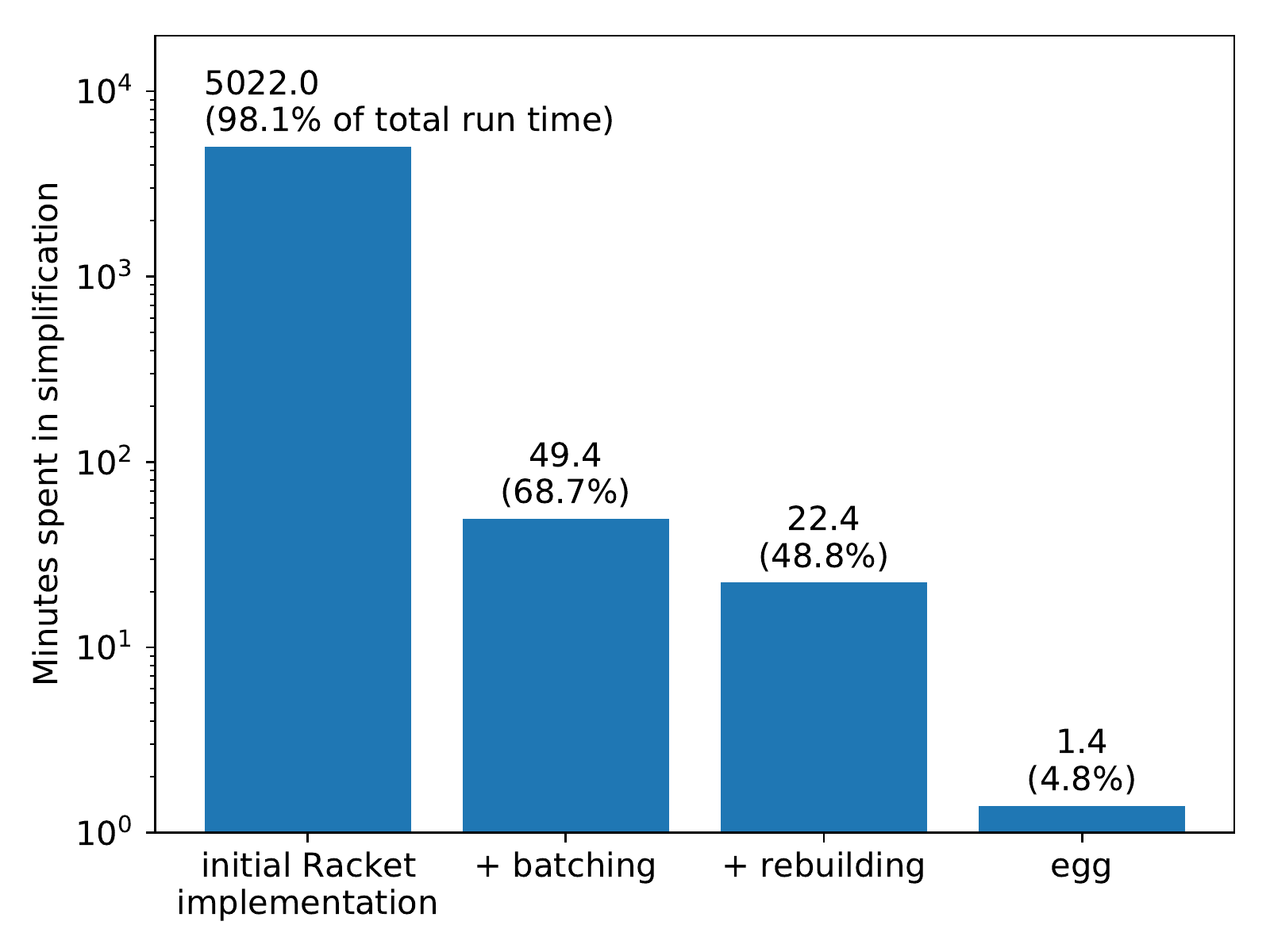}
  \caption{
    Herbie sped up its expression simplification phase
      by adopting \egg-inspired features like
      batched simplification and rebuilding
      into its Racket-based \egraph implementation.
    Herbie also supports using \egg itself for additional speedup.
    Note that the y-axis is log-scale.
  }
  \label{fig:herbie-results}
\end{figure}

Our \egg simplification backend
  is a drop-in replacement to the existing Herbie simplifier,
  making it easy to compare speed and results.
We compare using
  Herbie's standard test suite of roughly 500 benchmarks,
  with timeouts disabled.
\autoref{fig:herbie-results} shows the results.
The \egg simplification backend is over
  $3000\times$ faster than Herbie's initial simplifier.
This speedup eliminated Herbie's largest bottleneck:
  the initial implementation dominated Herbie's total run time at $98.1\%$,
  backporting \egg improvements into Herbie cuts
  that to about half the total run time,
  and \egg simplification takes under $5\%$ of the total run time.
Practically, the run time of Herbie's initial implementation was smaller, since
  timeouts cause tests failures when simplification takes too long.
Therefore, the speedup also improved Herbie's completeness,
  as simplification now never times out.

Since incorporating \egg into Herbie, the Herbie developers have backported some
  of \egg's key performance improvements into the Racket \egraph implementation.
First, batch simplification gives a large speedup because Herbie simplifies many
  similar expressions.
When done simultaneously in one equality saturation, the \egraph's structural
  sharing can massively deduplicate work.
Second, deferring rebuilding (as discussed in \autoref{sec:rebuilding}) gives a
  further $2.2\times$ speedup.
As demonstrated in \autoref{fig:eval-iter}, rebuilding offers an asymptotic
  speedup, so Herbie's improved implementation (and the \egg backend as well)
  will scale better as the search size grows.

\subsection{Spores: Optimizing Linear Algebra}
\label{sec:spores}

Spores \cite{spores} is an optimizer for machine learning programs. It
translates linear algebra (LA) expressions to relational algebra (RA), performs
rewrites, and finally translates the result back to linear algebra. Each rewrite
is built up from simple identities in relational algebra like the associativity
of join. These relational identities express more fine-grained equality than
textbook linear algebra identities, allowing Spores to discover novel
optimizations not found by traditional optimizers based on LA identities. Spores
performs holistic optimization, taking into account the complex interactions
among factors like sparsity, common subexpressions, and fusible operators and
their impact on execution time.


\subsubsection{Implementation}

Spores is implemented entirely in Rust using \Egg.
\Egg empowers Spores to orchestrate
  the complex interactions described above elegantly and effortlessly.
Spores works in three steps: first, it translates the input LA expression to RA;
second, it optimizes the RA expression by equality saturation; finally, it
translates the optimized RA expression back to LA.
Since the translation between LA and RA is
straightforward, we focus the discussion on the equality saturation step in RA.
Spores represents a relation as a function from tuples to real numbers: $A:
(a_1, a_2, ..., a_n) \rightarrow \mathbb{R}$. This is similar to the index
notation in linear algebra, where a matrix A can be viewed as a function
$\lambda i, j . A_{ij}$. A tuple is identified with a named record,
e.g. $(1, 2) = \{a_1: 1, a_2: 2\} = \{a_2: 2, a_1: 1\}$, so that order in a
tuple doesn't matter. There are just three operations on relations: join, union
and aggregate. Join ($\otimes$) takes two relations and returns their natural
join, multiplying the associated real number for joined tuples:
\[ A \otimes B = \lambda \bar{a} \cup \bar{b} . A(\bar{a}) \times B(\bar{b})\]
Here $\bar{a}$ is the set of field names for the records in $A$. In RA
terminology, $\bar{a}$ is the {\em schema} of $A$. Union ($\oplus$) is a join in
disguise: it also performs natural join on its two arguments, but adds the
associated real instead of multiplying it:
\[A \oplus B = \lambda \bar{a} \cup \bar{b} . A(\bar{a}) + B(\bar{b})\] Finally,
aggregate ($\Sigma$) sums its argument along a given dimension. It
coincides precisely with the ``sigma notation'' in mathematics:
\[\sum_{a_i} A = \lambda \bar{a}-a_i . \sum_{a_i} A(\bar{a})\]

\begin{figure}
\centering
\begin{align}
  \label{eq:RRC_pp} A \oplus (B \oplus C) &= \oplus (A, B, C)
  & \text{($\oplus$ is assoc. \& comm.)}
  \\
  \label{eq:RRC_mm} A \otimes (B \otimes C) &= \otimes (A, B, C)
  & \text{($\otimes$ is assoc. \& comm.)}
  \\
  \label{eq:RRC_mp} A \otimes (B \oplus C) &= A \otimes B \oplus A \otimes C
  & \text{($\otimes$ distributes over $\oplus$)}
  \\
  \label{eq:RRC_ap} \sum_i (A \oplus B) &= \sum_i A \oplus \sum_i B
  \\
  \label{eq:RRC_aa} \sum_i \sum_j A &= \sum_{i,j} A
  \\
  \label{eq:RRC_ma} A \otimes \sum_i B &= \sum_i (A \otimes B)
  &\text{(requires $i \not\in A$)}
  \\
  \label{eq:RRC_ac} \sum_i A &= A \otimes \textsf{dimension}(i)
  &\text{(requires $i \not\in A$)}
\end{align}
\caption{RA equality rules $R_{EQ}$.}
\label{fig:RRC}
\end{figure}

The RA identities, presented in \autoref{fig:RRC}, are also simple and intuitive.
The notation $i \not\in A$ means $i$ is not in the schema of $A$, and $dim(i)$
is the size of dimension $i$ (e.g. length of rows in a matrix). In
\autoref{eq:RRC_ma}, when $i \in A$, we first rename every $i$ to a fresh variable
$i'$ in $B$, which gives us: $A \otimes \sum_i B = \sum_{i'} (A \otimes B[i
\rightarrow i'])$. In addition to these equalities, Spores also supports
replacing expressions with fused operators. For example, $(X-UV)^2$ can be
replaced by $sqloss(X, U, V)$ which streams values from $X, U, V$ and
computes the result without creating intermediate matrices. Each of these fused
operators is encoded with a simple identity in \Egg.

Note that \autoref{eq:RRC_ma} requires a way
to store the schema of every expression during optimization.
Spores uses an \eclass analysis to annotate \eclasses with the appropriate
  schema. It also leverages the \eclass analysis for cost estimation,
  using a conservative cost model that overapproximates.
As a result, equivalent expressions may have different cost estimates.
The \texttt{merge} operation on the analysis data takes the lower cost,
  incrementally improving the cost estimate.
Finally, Spores' \eclass analysis also performs constant folding.
As a whole, the \eclass analysis is a composition of three smaller analyses
  in a similar style to the composition of lattices in abstract interpretation.

\subsubsection{Results}
Spores is integrated into Apache SystemML \cite{Boehm_2019} in a prototype,
where it is able to derive all of 84 hand-written rules and heuristics for
sum-product optimization. It also discovered novel rewrites that contribute
to $1.2\times$ to $5\times$ speedup in end-to-end experiments. With greedy
extraction, all compilations completed within a second.

\subsection{Szalinski: Decompiling CAD into Structured Programs}
\label{sec:szalinski}

\begin{figure}
  \centering
  \includegraphics[width=\linewidth]{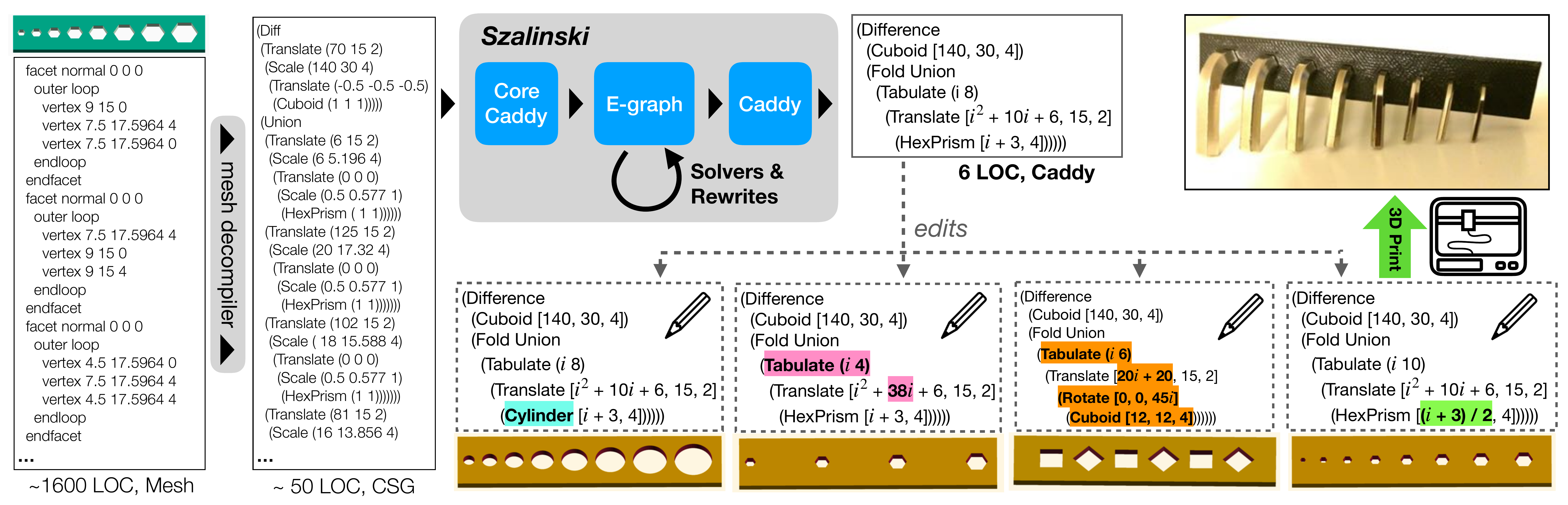}
  \caption[Szalinski decompiles flat CSG into structured CAD]{
  (Figure from Nandi et.\ al.\ \cite{szalinski})
  Existing mesh decompilers turn
    triangle meshes into flat, computational solid geometry (CSG) expressions.
  \sz~\cite{szalinski} takes in these CSG expressions
    in a format called Core Caddy,
    and it synthesizes smaller, structured programs in language called Caddy
    that is enriched with functional-style features.
  This can ease customization by simplifying edits:
    small, mostly local changes
    yield usefully different models.
  The photo shows the 3D printed hex wrench holder after
    customizing hole sizes.
  \sz is powered by \egg's extensible equality saturation, relying on its high
    performance, \eclass analyses, and dynamic rewrites.
  }
  \label{fig:sz-overview}
\end{figure}

Several tools have emerged
  that reverse engineer high level
  Computer Aided Design (CAD) models from polygon
  meshes and voxels~\cite{reincarnate, inverse, shape, csgnet, latex}.
The output of these tools are constructive solid geometry (CSG) programs.
A CSG program is comprised of
  3D solids like cubes, spheres, cylinders,
  affine transformations like scale, translate, rotate
  (which take a 3D vector and a CSG expression as arguments),
  and binary operators like union, intersection, and difference
  that combine CSG expressions.
For repetitive models like a gear, CSG programs can be too long
  and therefore difficult to comprehend.
A recent tool, \sz~\cite{szalinski},
  extracts the inherent structure
  in the CSG outputs of mesh decompilation tools
  by automatically inferring maps and folds (\autoref{fig:sz-overview}).
\sz accomplished this using \egg's extensible equality saturation system,
  allowing it to:

\begin{itemize}

  \item Discover structure using loop rerolling rules.
    This allows \sz to infer functional patterns like
    \texttt{Fold}, \texttt{Map2}, \texttt{Repeat} and
    \texttt{Tabulate} from flat CSG inputs.

  \item Identify equivalence among CAD terms that are
    expressed as different expressions by mesh decompilers.
    \sz accomplishes this by using CAD identities.
    An example of one such CAD identity in \sz is
    $e \leftrightarrow \mathit{rotate}~[0 ~ 0 ~ 0] ~ e$.
    This implies that any CAD expression $e$
    is equivalent to a CAD expression that applies
    a rotation by zero degrees about x, y, and z axes
    to $e$.

  \item Use external solvers to
    speculatively add potentially profitable
    expressions to the \egraph.
    Mesh decompilers often generate CSG expressions
    that order and/or group list elements in
    non-intuitive ways.
    To recover structure from such expressions,
    a tool like \sz must be able to reorder and regroup
    lists that expose any latent structure.

\end{itemize}

\subsubsection{Implementation}

Even though CAD is
  different from traditional languages
  targeted by programming language techniques,
  \egg supports \sz's CAD language in a straightforward manner.
\sz uses purely syntactic rewrites to express
  CAD identities and some loop rerolling rules
  (like inferring a \texttt{Fold} from a list of CAD expressions).
Critically, however, \sz relies on \egg's
  dynamic rewrites and \eclass analysis to infer functions
  for lists.

Consider the flat CSG program in \autoref{fig:sz-egg-input}.
A structure finding rewrite first rewrites the flat list of \texttt{Union}s to:
$$\texttt{(Fold Union (Map2 Translate [(0 0 0) (2 0 0) ...] (Repeat Cube 5)))}$$
The list of vectors is stored as \texttt{Cons} elements (sugared above for brevity).
\sz uses an \eclass analysis to track the accumulated lists in a similar style
  to constant folding.
Then, a dynamic rewrite uses an arithmetic solver to rewrite the concrete
  list of 3D vectors in the analysis data
  to \mbox{\texttt{(Tabulate (i 5) (* 2 i))}}.
A final set of syntactic rewrites can hoist the \texttt{Tabulate}, yielding the
  result on the right of \autoref{fig:sz-egg}.
Thanks to the set of syntactic CAD rewrites, this structure finding even works
  in the face of CAD identities.
For example, the original program may omit the no-op
  \texttt{Translate (0 0 0)}, even though it is necessary to see repetitive
  structure.

\begin{figure}
\begin{subfigure}[b]{0.3\linewidth}
  \includegraphics[width=\linewidth]{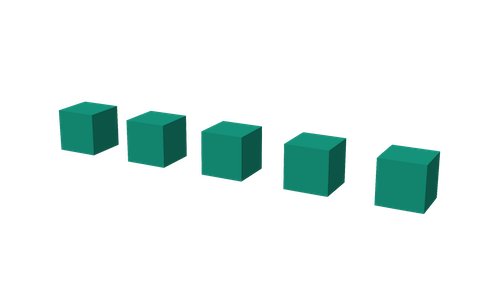}
  \caption{Five cubes in a line.}
\end{subfigure}
\hfill
\begin{subfigure}[b]{0.3\linewidth}
  \begin{lstlisting}[language=Rust, gobble=4, basicstyle=\footnotesize\ttfamily]
    (Union
      (Translate (0 0 0) Cube)
      (Translate (2 0 0) Cube)
      (Translate (4 0 0) Cube)
      (Translate (6 0 0) Cube)
      (Translate (8 0 0) Cube))
  \end{lstlisting}
  \caption{Flat CSG input to \sz.}
  \label{fig:sz-egg-input}
\end{subfigure}
\hfill
\begin{subfigure}[b]{0.35\linewidth}
  \begin{lstlisting}[language=Rust, gobble=4, basicstyle=\footnotesize\ttfamily, showlines=true, xleftmargin=5mm]
    (Fold Union
      (Tabulate (i 5)
        (Translate
          ((* 2 i) 0 0)
          Cube)))

  \end{lstlisting}
  \caption{Output captures the repetition.}
  \label{fig:sz-egg-output}
\end{subfigure}
  \caption{
    \sz integrates solvers into \egg's equality saturation as a dynamic rewrite.
    The solver-backed rewrites can transform repetitive lists into
    \texttt{Tabulate} expressions that capture the repetitive structure.
  }
  \label{fig:sz-egg}
\end{figure}

In many cases, the repetitive structure of input CSG expression is further
  obfuscated because subexpressions may appear in arbitrary order.
For these inputs, the arithmetic solvers must first reorder
  the expressions to find a closed form like a \texttt{Tabulate}
  as shown in \autoref{fig:sz-egg}.
However, reordering a list does not preserve equivalence, so adding it to the
  \eclass of the concrete list would be unsound.
\sz therefore introduces \textit{inverse transformations},
  a novel technique that allows solvers to speculatively reorder and regroup list
  elements to find a closed form.
The solvers annotate the potentially profitable expression with the
  permutation or grouping that led to the successful discovery
  of the closed form.
Later in the rewriting process, syntactic rewrites eliminate the inverse
  transformations when possible
  (e.g., reordering lists under a \texttt{Fold Union} can be eliminated).
\egg supported this novel technique without modification.

\subsubsection{Results}
\sz's initial protoype used a custom \egraph written in OCaml.
Anecdotally, switching to \egg
  removed most of the code,
  eliminated bugs,
  facilitated the key contributions of solver-backed rewrites and inverse transformations,
  and made the tool about $1000 \times$ faster.
\egg's performance allowed a shift from running on small, hand-picked
  examples to a comprehensive evaluation on over 2000 real-world models from
  a 3D model sharing forum~\cite{szalinski}.


\section{Related Work}
\label{sec:related}

\paragraph{Term Rewriting}
Term rewriting~\cite{rewritesystems} has been used widely to facilitate
equational reasoning for program
optimizations~\cite{DBLP:conf/scitools/BoyleHW96,
  DBLP:journals/toplas/BrandHKO02, DBLP:conf/icfp/VisserBT98}. A term rewriting
system applies a database of semantics preserving rewrites or axioms to an input
expression to get a new expression, which may, according to some cost function,
be more profitable compared to the input. Rewrites are typically symbolic and
have a left hand side and a right hand side. To apply a rewrite to an
expression, a rewrite system implements pattern matching---if the left hand side
of a rewrite rule matches with the input expression, the system computes a
substitution which is then applied to the right-hand side of the rewrite rule.
Upon applying a rewrite rule, a rewrite system typically replaces the old
expression by the new expression. This can lead to the \textit{phase ordering}
problem--- it makes it impossible to apply a rewrite to the old expression in
the future which could have led to a more optimal result.

\paragraph{\Egraphs and E-matching}
\Egraph were originally proposed several decades ago as an
efficient data structure for maintaining congruence
closure~\cite{nelson-oppen-78, kozen-stoc77, nelson}.
\Egraphs continue to be a critical component in successful SMT
solvers where they are used for combining satisfiability theories by sharing equality
information~\cite{z3}.
 A key difference between past implementations of \egraphs and
  \egg's \egraph is our novel rebuilding algorithm that maintains
  invariants only at certain critical points~(\autoref{sec:rebuild}).
This makes \egg more efficient for the purpose of equality saturation.
  \egg implements the pattern compilation strategy introduced by de Moura et al.~\cite{ematching}
  that is used in state of the art theorem provers~\cite{z3}.
Some provers~\cite{z3, simplify} propose optimizations like mod-time,
pattern-element and inverted-path-index to find new terms and relevant patterns
for matching, and avoid redundant matches. So far, we have found \egg to be
faster than several prior \egraph implementations even without
these optimizations. They are,
however, compatible with \egg's design and could be explored in the future. Another
key difference is \egg's powerful \eclass analysis abstraction and flexible
interface. They empower the programmer to easily leverage \egraphs for problems
involving complex semantic reasoning.

\paragraph{Congruence Closure}
Our rebuilding algorithm is similar to the congruence closure algorithm
  presented by \citet{downey-cse}.
The contribution of rebuilding is not \emph{how} it restores the \egraph invariants
  but \emph{when};
  it gives the client the ability to specialize invariant restoration to a
  particular workload like equality saturation.
Their algorithm also features a worklist of merges to be processed further,
  but it is offline, i.e.,
  the algorithm processes a given set of equalities and outputs the set of
  equalities closed over congruence.
Rebuilding is adapted to the online e-graph (and equality saturation) setting,
  where rewrites frequently examine the current set of equalities and assert new ones.
Rebuilding additionally propagates e-class analysis facts (\autoref{sec:analysis}).
Despite these differences,
  the core algorithms algorithms are similar enough that theoretical results on
  offline performance characteristics \cite{downey-cse} apply to both.
We do not provide theoretical analysis of rebuilding for the online setting;
  it is likely highly workload dependent.

\paragraph{Superoptimization and Equality Saturation}
The Denali~\cite{denali} superoptimizer first demonstrated how to use \egraphs
for optimized code generation as an alternative to hand-optimized machine code
and prior exhaustive approaches~\cite{massalin}, both of which were less
scalable. The inputs to Denali are programs in a C-like language from which it
produces assembly programs. Denali supported three types of
rewrites---arithmetic, architectural, and program-specific. After applying these
rewrites till saturation, it used architectural description of the hardware to
generate constraints that were solved using a SAT solver to output a
near-optimal program. While Denali's approach was a significant improvement over
prior work, it was intended to be used on straight line code only and therefore,
did not apply to large real programs.

Equality saturation~\cite{eqsat, eqsat-llvm} developed a compiler optimization
phase that works for complex language constructs like loops and conditionals.
The first equality saturation paper used an intermediate representation called
Program Expression Graphs (PEGs) to encode loops and conditionals. PEGs have
specialized nodes that can represent infinite sequences, which allows them to
represent loops. It uses a global profitability heuristic for extraction which
is implemented using a pseudo-boolean solver. Recently, \cite{yogo-pldi20} uses
PEGs for code search.
\egg can support PEGs as a user-defined language,
and thus their technique could be ported.


\section{Conclusion}
\label{sec:conclusion}

We presented two new techniques,
  rebuilding and \eclass analysis,
  that make equality saturation
  fast and extensible enough for a new family of
  applications.
Rebuilding is a new way to
  amortize the cost of maintaining the \egraph's
  data structure invariants,
  specializing the \egraph to the equality saturation workload.
\Eclass analysis is a general framework
  allowing for interpreted reasoning
  beyond what purely syntactic rewrites can provide.

We implemented both of these techniques in \egg,
  a reusable, extensible, and efficient \egraph library.
\Egg is generic over the user-defined language,
  which allowed focused effort on optimization and efficiency
  while obviating the need for
  ad hoc \egraph implementations and manipulations.
Our case studies show that equality saturation can now scale
  further and be used more flexibly than before;
  \egg provided new functionality and large speedups.
We believe that these contributions
  position equality saturation as a powerful toolkit for
  program synthesis and optimization.



\begin{acks}
  Thanks to our anonymous paper and artifact reviewers for their feedback.
  Special thanks to our shepherd Simon Peyton Jones,
    Leonardo de Moura,
    and many
    members of the \href{https://uwplse.org}{PLSE} group.
  This work was supported in part by
    the Applications Driving Architectures (ADA) Research Center,
    a JUMP Center co-sponsored by SRC and DARPA,
    as well as the National Science Foundation
    under Grant Nos. 1813166 and 1749570.
\end{acks}

\bibliography{references}


\begin{thebibliography}{33}


\ifx \showCODEN    \undefined \def \showCODEN     #1{\unskip}     \fi
\ifx \showDOI      \undefined \def \showDOI       #1{#1}\fi
\ifx \showISBNx    \undefined \def \showISBNx     #1{\unskip}     \fi
\ifx \showISBNxiii \undefined \def \showISBNxiii  #1{\unskip}     \fi
\ifx \showISSN     \undefined \def \showISSN      #1{\unskip}     \fi
\ifx \showLCCN     \undefined \def \showLCCN      #1{\unskip}     \fi
\ifx \shownote     \undefined \def \shownote      #1{#1}          \fi
\ifx \showarticletitle \undefined \def \showarticletitle #1{#1}   \fi
\ifx \showURL      \undefined \def \showURL       {\relax}        \fi
\providecommand\bibfield[2]{#2}
\providecommand\bibinfo[2]{#2}
\providecommand\natexlab[1]{#1}
\providecommand\showeprint[2][]{arXiv:#2}

\bibitem[\protect\citeauthoryear{Andries, Engels, Habel, Hoffmann, Kreowski,
  Kuske, Plump, Sch\"{u}rr, and Taentzer}{Andries et~al\mbox{.}}{1999}]%
        {graphs}
\bibfield{author}{\bibinfo{person}{Marc Andries}, \bibinfo{person}{Gregor
  Engels}, \bibinfo{person}{Annegret Habel}, \bibinfo{person}{Berthold
  Hoffmann}, \bibinfo{person}{Hans-J\"{o}rg Kreowski}, \bibinfo{person}{Sabine
  Kuske}, \bibinfo{person}{Detlef Plump}, \bibinfo{person}{Andy Sch\"{u}rr},
  {and} \bibinfo{person}{Gabriele Taentzer}.} \bibinfo{year}{1999}\natexlab{}.
\newblock \showarticletitle{Graph Transformation for Specification and
  Programming}.
\newblock \bibinfo{journal}{\emph{Sci. Comput. Program.}} \bibinfo{volume}{34},
  \bibinfo{number}{1} (\bibinfo{date}{April} \bibinfo{year}{1999}),
  \bibinfo{pages}{1–54}.
\newblock
\showISSN{0167-6423}
\urldef\tempurl%
\url{https://doi.org/10.1016/S0167-6423(98)00023-9}
\showDOI{\tempurl}


\bibitem[\protect\citeauthoryear{Boehm}{Boehm}{2019}]%
        {Boehm_2019}
\bibfield{author}{\bibinfo{person}{Matthias Boehm}.}
  \bibinfo{year}{2019}\natexlab{}.
\newblock \showarticletitle{Apache {SystemML}}.
\newblock \bibinfo{journal}{\emph{Encyclopedia of Big Data Technologies}}
  (\bibinfo{year}{2019}), \bibinfo{pages}{81–86}.
\newblock
\showISBNx{9783319775258}
\urldef\tempurl%
\url{https://doi.org/10.1007/978-3-319-77525-8_187}
\showDOI{\tempurl}


\bibitem[\protect\citeauthoryear{Boyle, Harmer, and Winter}{Boyle
  et~al\mbox{.}}{1996}]%
        {DBLP:conf/scitools/BoyleHW96}
\bibfield{author}{\bibinfo{person}{James~M. Boyle}, \bibinfo{person}{Terence~J.
  Harmer}, {and} \bibinfo{person}{Victor~L. Winter}.}
  \bibinfo{year}{1996}\natexlab{}.
\newblock \showarticletitle{The {TAMPR} Program Transformation System:
  Simplifying the Development of Numerical Software}. In
  \bibinfo{booktitle}{\emph{Modern Software Tools for Scientific Computing,
  SciTools 1996, Oslo, Norway, September 16-18, 1996}},
  \bibfield{editor}{\bibinfo{person}{Erlend Arge}, \bibinfo{person}{Are~Magnus
  Bruaset}, {and} \bibinfo{person}{Hans~Petter Langtangen}} (Eds.).
  \bibinfo{publisher}{Birkh{\"{a}}user}, \bibinfo{pages}{353--372}.
\newblock
\urldef\tempurl%
\url{https://doi.org/10.1007/978-1-4612-1986-6\_17}
\showDOI{\tempurl}


\bibitem[\protect\citeauthoryear{Davis and Putnam}{Davis and Putnam}{1960}]%
        {dpll}
\bibfield{author}{\bibinfo{person}{Martin Davis} {and} \bibinfo{person}{Hilary
  Putnam}.} \bibinfo{year}{1960}\natexlab{}.
\newblock \showarticletitle{A Computing Procedure for Quantification Theory}.
\newblock \bibinfo{journal}{\emph{J. ACM}} \bibinfo{volume}{7},
  \bibinfo{number}{3} (\bibinfo{date}{July} \bibinfo{year}{1960}),
  \bibinfo{pages}{201–215}.
\newblock
\showISSN{0004-5411}
\urldef\tempurl%
\url{https://doi.org/10.1145/321033.321034}
\showDOI{\tempurl}


\bibitem[\protect\citeauthoryear{de~Moura and Bj{\o}rner}{de~Moura and
  Bj{\o}rner}{2007}]%
        {ematching}
\bibfield{author}{\bibinfo{person}{Leonardo de Moura} {and}
  \bibinfo{person}{Nikolaj Bj{\o}rner}.} \bibinfo{year}{2007}\natexlab{}.
\newblock \showarticletitle{Efficient E-Matching for SMT Solvers}. In
  \bibinfo{booktitle}{\emph{Automated Deduction -- CADE-21}},
  \bibfield{editor}{\bibinfo{person}{Frank Pfenning}} (Ed.).
  \bibinfo{publisher}{Springer Berlin Heidelberg}, \bibinfo{address}{Berlin,
  Heidelberg}, \bibinfo{pages}{183--198}.
\newblock
\showISBNx{978-3-540-73595-3}


\bibitem[\protect\citeauthoryear{De~Moura and Bj{\o}rner}{De~Moura and
  Bj{\o}rner}{2008}]%
        {z3}
\bibfield{author}{\bibinfo{person}{Leonardo De~Moura} {and}
  \bibinfo{person}{Nikolaj Bj{\o}rner}.} \bibinfo{year}{2008}\natexlab{}.
\newblock \showarticletitle{Z3: An Efficient SMT Solver}. In
  \bibinfo{booktitle}{\emph{Proceedings of the Theory and Practice of Software,
  14th International Conference on Tools and Algorithms for the Construction
  and Analysis of Systems}} (Budapest, Hungary)
  \emph{(\bibinfo{series}{TACAS'08/ETAPS'08})}.
  \bibinfo{publisher}{Springer-Verlag}, \bibinfo{address}{Berlin, Heidelberg},
  \bibinfo{pages}{337--340}.
\newblock
\showISBNx{3-540-78799-2, 978-3-540-78799-0}
\urldef\tempurl%
\url{http://dl.acm.org/citation.cfm?id=1792734.1792766}
\showURL{%
\tempurl}


\bibitem[\protect\citeauthoryear{Dershowitz}{Dershowitz}{1993}]%
        {nachum-rewrites}
\bibfield{author}{\bibinfo{person}{Nachum Dershowitz}.}
  \bibinfo{year}{1993}\natexlab{}.
\newblock \bibinfo{booktitle}{\emph{A taste of rewrite systems}}.
\newblock \bibinfo{publisher}{Springer Berlin Heidelberg},
  \bibinfo{address}{Berlin, Heidelberg}, \bibinfo{pages}{199--228}.
\newblock
\showISBNx{978-3-540-47776-1}
\urldef\tempurl%
\url{https://doi.org/10.1007/3-540-56883-2_11}
\showDOI{\tempurl}


\bibitem[\protect\citeauthoryear{Dershowitz and Jouannaud}{Dershowitz and
  Jouannaud}{1990}]%
        {rewritesystems}
\bibfield{author}{\bibinfo{person}{Nachum Dershowitz} {and}
  \bibinfo{person}{Jean{-}Pierre Jouannaud}.} \bibinfo{year}{1990}\natexlab{}.
\newblock \showarticletitle{Rewrite Systems}.
\newblock In \bibinfo{booktitle}{\emph{Handbook of Theoretical Computer
  Science, Volume {B:} Formal Models and Semantics}},
  \bibfield{editor}{\bibinfo{person}{Jan van Leeuwen}} (Ed.).
  \bibinfo{publisher}{Elsevier and {MIT} Press}, \bibinfo{pages}{243--320}.
\newblock
\urldef\tempurl%
\url{https://doi.org/10.1016/b978-0-444-88074-1.50011-1}
\showDOI{\tempurl}


\bibitem[\protect\citeauthoryear{Detlefs, Nelson, and Saxe}{Detlefs
  et~al\mbox{.}}{2005}]%
        {simplify}
\bibfield{author}{\bibinfo{person}{David Detlefs}, \bibinfo{person}{Greg
  Nelson}, {and} \bibinfo{person}{James~B. Saxe}.}
  \bibinfo{year}{2005}\natexlab{}.
\newblock \showarticletitle{Simplify: A Theorem Prover for Program Checking}.
\newblock \bibinfo{journal}{\emph{J. ACM}} \bibinfo{volume}{52},
  \bibinfo{number}{3} (\bibinfo{date}{May} \bibinfo{year}{2005}),
  \bibinfo{pages}{365--473}.
\newblock
\showISSN{0004-5411}
\urldef\tempurl%
\url{https://doi.org/10.1145/1066100.1066102}
\showDOI{\tempurl}


\bibitem[\protect\citeauthoryear{Downey, Sethi, and Tarjan}{Downey
  et~al\mbox{.}}{1980}]%
        {downey-cse}
\bibfield{author}{\bibinfo{person}{Peter~J. Downey}, \bibinfo{person}{Ravi
  Sethi}, {and} \bibinfo{person}{Robert~Endre Tarjan}.}
  \bibinfo{year}{1980}\natexlab{}.
\newblock \showarticletitle{Variations on the Common Subexpression Problem}.
\newblock \bibinfo{journal}{\emph{J. ACM}} \bibinfo{volume}{27},
  \bibinfo{number}{4} (\bibinfo{date}{Oct.} \bibinfo{year}{1980}),
  \bibinfo{pages}{758–771}.
\newblock
\showISSN{0004-5411}
\urldef\tempurl%
\url{https://doi.org/10.1145/322217.322228}
\showDOI{\tempurl}


\bibitem[\protect\citeauthoryear{Du, Priya~Inala, Pu, Spielberg, Schulz, Rus,
  Solar-Lezama, and Matusik}{Du et~al\mbox{.}}{2018}]%
        {inverse}
\bibfield{author}{\bibinfo{person}{Tao Du}, \bibinfo{person}{Jeevana
  Priya~Inala}, \bibinfo{person}{Yewen Pu}, \bibinfo{person}{Andrew Spielberg},
  \bibinfo{person}{Adriana Schulz}, \bibinfo{person}{Daniela Rus},
  \bibinfo{person}{Armando Solar-Lezama}, {and} \bibinfo{person}{Wojciech
  Matusik}.} \bibinfo{year}{2018}\natexlab{}.
\newblock \showarticletitle{InverseCSG: automatic conversion of 3D models to
  CSG trees}. \bibinfo{pages}{1--16}.
\newblock
\urldef\tempurl%
\url{https://doi.org/10.1145/3272127.3275006}
\showDOI{\tempurl}


\bibitem[\protect\citeauthoryear{Ellis, Ritchie, Solar-Lezama, and
  Tenenbaum}{Ellis et~al\mbox{.}}{2018}]%
        {latex}
\bibfield{author}{\bibinfo{person}{Kevin Ellis}, \bibinfo{person}{Daniel
  Ritchie}, \bibinfo{person}{Armando Solar-Lezama}, {and}
  \bibinfo{person}{Joshua~B. Tenenbaum}.} \bibinfo{year}{2018}\natexlab{}.
\newblock \showarticletitle{Learning to Infer Graphics Programs from Hand-Drawn
  Images}. In \bibinfo{booktitle}{\emph{Neural Information Processing Systems
  (NIPS)}}.
\newblock


\bibitem[\protect\citeauthoryear{Jia, Padon, Thomas, Warszawski, Zaharia, and
  Aiken}{Jia et~al\mbox{.}}{2019}]%
        {taso}
\bibfield{author}{\bibinfo{person}{Zhihao Jia}, \bibinfo{person}{Oded Padon},
  \bibinfo{person}{James Thomas}, \bibinfo{person}{Todd Warszawski},
  \bibinfo{person}{Matei Zaharia}, {and} \bibinfo{person}{Alex Aiken}.}
  \bibinfo{year}{2019}\natexlab{}.
\newblock \showarticletitle{TASO: optimizing deep learning computation with
  automatic generation of graph substitutions}. In
  \bibinfo{booktitle}{\emph{Proceedings of the 27th ACM Symposium on Operating
  Systems Principles}}. \bibinfo{pages}{47--62}.
\newblock


\bibitem[\protect\citeauthoryear{Joshi, Nelson, and Randall}{Joshi
  et~al\mbox{.}}{2002}]%
        {denali}
\bibfield{author}{\bibinfo{person}{Rajeev Joshi}, \bibinfo{person}{Greg
  Nelson}, {and} \bibinfo{person}{Keith Randall}.}
  \bibinfo{year}{2002}\natexlab{}.
\newblock \showarticletitle{Denali: A Goal-directed Superoptimizer}.
\newblock \bibinfo{journal}{\emph{SIGPLAN Not.}} \bibinfo{volume}{37},
  \bibinfo{number}{5} (\bibinfo{date}{May} \bibinfo{year}{2002}),
  \bibinfo{pages}{304--314}.
\newblock
\showISSN{0362-1340}
\urldef\tempurl%
\url{https://doi.org/10.1145/543552.512566}
\showDOI{\tempurl}


\bibitem[\protect\citeauthoryear{Kozen}{Kozen}{1977}]%
        {kozen-stoc77}
\bibfield{author}{\bibinfo{person}{Dexter Kozen}.}
  \bibinfo{year}{1977}\natexlab{}.
\newblock \showarticletitle{Complexity of Finitely Presented Algebras}. In
  \bibinfo{booktitle}{\emph{Proceedings of the Ninth Annual ACM Symposium on
  Theory of Computing}} (Boulder, Colorado, USA) \emph{(\bibinfo{series}{STOC
  '77})}. \bibinfo{publisher}{Association for Computing Machinery},
  \bibinfo{address}{New York, NY, USA}, \bibinfo{pages}{164–177}.
\newblock
\showISBNx{9781450374095}
\urldef\tempurl%
\url{https://doi.org/10.1145/800105.803406}
\showDOI{\tempurl}


\bibitem[\protect\citeauthoryear{Massalin}{Massalin}{1987}]%
        {massalin}
\bibfield{author}{\bibinfo{person}{Henry Massalin}.}
  \bibinfo{year}{1987}\natexlab{}.
\newblock \showarticletitle{Superoptimizer: A Look at the Smallest Program}. In
  \bibinfo{booktitle}{\emph{Proceedings of the Second International Conference
  on Architectual Support for Programming Languages and Operating Systems}}
  (Palo Alto, California, USA) \emph{(\bibinfo{series}{ASPLOS II})}.
  \bibinfo{publisher}{IEEE Computer Society Press},
  \bibinfo{address}{Washington, DC, USA}, \bibinfo{pages}{122–126}.
\newblock
\showISBNx{0818608056}
\urldef\tempurl%
\url{https://doi.org/10.1145/36206.36194}
\showDOI{\tempurl}


\bibitem[\protect\citeauthoryear{Nandi, Wilcox, Panchekha, Blau, Grossman, and
  Tatlock}{Nandi et~al\mbox{.}}{2018}]%
        {reincarnate}
\bibfield{author}{\bibinfo{person}{Chandrakana Nandi},
  \bibinfo{person}{James~R. Wilcox}, \bibinfo{person}{Pavel Panchekha},
  \bibinfo{person}{Taylor Blau}, \bibinfo{person}{Dan Grossman}, {and}
  \bibinfo{person}{Zachary Tatlock}.} \bibinfo{year}{2018}\natexlab{}.
\newblock \showarticletitle{Functional Programming for Compiling and
  Decompiling Computer-aided Design}.
\newblock \bibinfo{journal}{\emph{Proc. ACM Program. Lang.}}
  \bibinfo{volume}{2}, \bibinfo{number}{ICFP}, Article \bibinfo{articleno}{99}
  (\bibinfo{date}{July} \bibinfo{year}{2018}), \bibinfo{numpages}{31}~pages.
\newblock
\showISSN{2475-1421}
\urldef\tempurl%
\url{https://doi.org/10.1145/3236794}
\showDOI{\tempurl}


\bibitem[\protect\citeauthoryear{Nandi, Willsey, Anderson, Wilcox, Darulova,
  Grossman, and Tatlock}{Nandi et~al\mbox{.}}{2020}]%
        {szalinski}
\bibfield{author}{\bibinfo{person}{Chandrakana Nandi}, \bibinfo{person}{Max
  Willsey}, \bibinfo{person}{Adam Anderson}, \bibinfo{person}{James~R. Wilcox},
  \bibinfo{person}{Eva Darulova}, \bibinfo{person}{Dan Grossman}, {and}
  \bibinfo{person}{Zachary Tatlock}.} \bibinfo{year}{2020}\natexlab{}.
\newblock \showarticletitle{Synthesizing Structured {CAD} Models with Equality
  Saturation and Inverse Transformations}. In
  \bibinfo{booktitle}{\emph{Proceedings of the 41st ACM SIGPLAN Conference on
  Programming Language Design and Implementation}} (London, UK)
  \emph{(\bibinfo{series}{PLDI 2020})}. \bibinfo{publisher}{Association for
  Computing Machinery}, \bibinfo{address}{New York, NY, USA},
  \bibinfo{pages}{31–44}.
\newblock
\showISBNx{9781450376136}
\urldef\tempurl%
\url{https://doi.org/10.1145/3385412.3386012}
\showDOI{\tempurl}


\bibitem[\protect\citeauthoryear{Nelson}{Nelson}{1980}]%
        {nelson}
\bibfield{author}{\bibinfo{person}{Charles~Gregory Nelson}.}
  \bibinfo{year}{1980}\natexlab{}.
\newblock \emph{\bibinfo{title}{Techniques for Program Verification}}.
\newblock \bibinfo{thesistype}{Ph.D. Dissertation}. \bibinfo{address}{Stanford,
  CA, USA}.
\newblock
\newblock
\shownote{AAI8011683.}


\bibitem[\protect\citeauthoryear{Nelson and Oppen}{Nelson and Oppen}{1980}]%
        {nelson-oppen-78}
\bibfield{author}{\bibinfo{person}{Greg Nelson} {and} \bibinfo{person}{Derek~C.
  Oppen}.} \bibinfo{year}{1980}\natexlab{}.
\newblock \showarticletitle{Fast Decision Procedures Based on Congruence
  Closure}.
\newblock \bibinfo{journal}{\emph{J. ACM}} \bibinfo{volume}{27},
  \bibinfo{number}{2} (\bibinfo{date}{April} \bibinfo{year}{1980}),
  \bibinfo{pages}{356–364}.
\newblock
\showISSN{0004-5411}
\urldef\tempurl%
\url{https://doi.org/10.1145/322186.322198}
\showDOI{\tempurl}


\bibitem[\protect\citeauthoryear{Nieuwenhuis and Oliveras}{Nieuwenhuis and
  Oliveras}{2005}]%
        {pp-congr}
\bibfield{author}{\bibinfo{person}{Robert Nieuwenhuis} {and}
  \bibinfo{person}{Albert Oliveras}.} \bibinfo{year}{2005}\natexlab{}.
\newblock \showarticletitle{Proof-Producing Congruence Closure}. In
  \bibinfo{booktitle}{\emph{Proceedings of the 16th International Conference on
  Term Rewriting and Applications}} (Nara, Japan)
  \emph{(\bibinfo{series}{RTA’05})}. \bibinfo{publisher}{Springer-Verlag},
  \bibinfo{address}{Berlin, Heidelberg}, \bibinfo{pages}{453–468}.
\newblock
\showISBNx{3540255966}
\urldef\tempurl%
\url{https://doi.org/10.1007/978-3-540-32033-3_33}
\showDOI{\tempurl}


\bibitem[\protect\citeauthoryear{Panchekha, Sanchez-Stern, Wilcox, and
  Tatlock}{Panchekha et~al\mbox{.}}{2015}]%
        {herbie}
\bibfield{author}{\bibinfo{person}{Pavel Panchekha}, \bibinfo{person}{Alex
  Sanchez-Stern}, \bibinfo{person}{James~R. Wilcox}, {and}
  \bibinfo{person}{Zachary Tatlock}.} \bibinfo{year}{2015}\natexlab{}.
\newblock \showarticletitle{Automatically Improving Accuracy for Floating Point
  Expressions}.
\newblock \bibinfo{journal}{\emph{SIGPLAN Not.}} \bibinfo{volume}{50},
  \bibinfo{number}{6} (\bibinfo{date}{June} \bibinfo{year}{2015}),
  \bibinfo{pages}{1–11}.
\newblock
\showISSN{0362-1340}
\urldef\tempurl%
\url{https://doi.org/10.1145/2813885.2737959}
\showDOI{\tempurl}


\bibitem[\protect\citeauthoryear{Premtoon, Koppel, and Solar-Lezama}{Premtoon
  et~al\mbox{.}}{2020}]%
        {yogo-pldi20}
\bibfield{author}{\bibinfo{person}{Varot Premtoon}, \bibinfo{person}{James
  Koppel}, {and} \bibinfo{person}{Armando Solar-Lezama}.}
  \bibinfo{year}{2020}\natexlab{}.
\newblock \showarticletitle{Semantic Code Search via Equational Reasoning}. In
  \bibinfo{booktitle}{\emph{Proceedings of the 41st ACM SIGPLAN Conference on
  Programming Language Design and Implementation}} (London, UK)
  \emph{(\bibinfo{series}{PLDI 2020})}. \bibinfo{publisher}{Association for
  Computing Machinery}, \bibinfo{address}{New York, NY, USA},
  \bibinfo{pages}{1066–1082}.
\newblock
\showISBNx{9781450376136}
\urldef\tempurl%
\url{https://doi.org/10.1145/3385412.3386001}
\showDOI{\tempurl}


\bibitem[\protect\citeauthoryear{Rust}{Rust}{[n.d.]}]%
        {rust}
\bibfield{author}{\bibinfo{person}{Rust}.} \bibinfo{year}{[n.d.]}\natexlab{}.
\newblock \bibinfo{booktitle}{\emph{Rust programming language}}.
\newblock
\urldef\tempurl%
\url{https://www.rust-lang.org/}
\showURL{%
\tempurl}


\bibitem[\protect\citeauthoryear{Sharma, Goyal, Liu, Kalogerakis, and
  Maji}{Sharma et~al\mbox{.}}{2017}]%
        {csgnet}
\bibfield{author}{\bibinfo{person}{Gopal Sharma}, \bibinfo{person}{Rishabh
  Goyal}, \bibinfo{person}{Difan Liu}, \bibinfo{person}{Evangelos Kalogerakis},
  {and} \bibinfo{person}{Subhransu Maji}.} \bibinfo{year}{2017}\natexlab{}.
\newblock \showarticletitle{CSGNet: Neural Shape Parser for Constructive Solid
  Geometry}.
\newblock \bibinfo{journal}{\emph{CoRR}}  \bibinfo{volume}{abs/1712.08290}
  (\bibinfo{year}{2017}).
\newblock
\showeprint[arxiv]{1712.08290}
\urldef\tempurl%
\url{http://arxiv.org/abs/1712.08290}
\showURL{%
\tempurl}


\bibitem[\protect\citeauthoryear{Stepp, Tate, and Lerner}{Stepp
  et~al\mbox{.}}{2011}]%
        {eqsat-llvm}
\bibfield{author}{\bibinfo{person}{Michael Stepp}, \bibinfo{person}{Ross Tate},
  {and} \bibinfo{person}{Sorin Lerner}.} \bibinfo{year}{2011}\natexlab{}.
\newblock \showarticletitle{Equality-Based Translation Validator for LLVM}. In
  \bibinfo{booktitle}{\emph{Computer Aided Verification}},
  \bibfield{editor}{\bibinfo{person}{Ganesh Gopalakrishnan} {and}
  \bibinfo{person}{Shaz Qadeer}} (Eds.). \bibinfo{publisher}{Springer Berlin
  Heidelberg}, \bibinfo{address}{Berlin, Heidelberg},
  \bibinfo{pages}{737--742}.
\newblock
\showISBNx{978-3-642-22110-1}


\bibitem[\protect\citeauthoryear{Tarjan}{Tarjan}{1975}]%
        {unionfind}
\bibfield{author}{\bibinfo{person}{Robert~Endre Tarjan}.}
  \bibinfo{year}{1975}\natexlab{}.
\newblock \showarticletitle{Efficiency of a Good But Not Linear Set Union
  Algorithm}.
\newblock \bibinfo{journal}{\emph{J. ACM}} \bibinfo{volume}{22},
  \bibinfo{number}{2} (\bibinfo{date}{April} \bibinfo{year}{1975}),
  \bibinfo{pages}{215–225}.
\newblock
\showISSN{0004-5411}
\urldef\tempurl%
\url{https://doi.org/10.1145/321879.321884}
\showDOI{\tempurl}


\bibitem[\protect\citeauthoryear{Tate, Stepp, Tatlock, and Lerner}{Tate
  et~al\mbox{.}}{2009}]%
        {eqsat}
\bibfield{author}{\bibinfo{person}{Ross Tate}, \bibinfo{person}{Michael Stepp},
  \bibinfo{person}{Zachary Tatlock}, {and} \bibinfo{person}{Sorin Lerner}.}
  \bibinfo{year}{2009}\natexlab{}.
\newblock \showarticletitle{Equality Saturation: A New Approach to
  Optimization}. In \bibinfo{booktitle}{\emph{Proceedings of the 36th Annual
  ACM SIGPLAN-SIGACT Symposium on Principles of Programming Languages}}
  (Savannah, GA, USA) \emph{(\bibinfo{series}{POPL '09})}.
  \bibinfo{publisher}{ACM}, \bibinfo{address}{New York, NY, USA},
  \bibinfo{pages}{264--276}.
\newblock
\showISBNx{978-1-60558-379-2}
\urldef\tempurl%
\url{https://doi.org/10.1145/1480881.1480915}
\showDOI{\tempurl}


\bibitem[\protect\citeauthoryear{Tian, Luo, Sun, Ellis, Freeman, Tenenbaum, and
  Wu}{Tian et~al\mbox{.}}{2019}]%
        {shape}
\bibfield{author}{\bibinfo{person}{Yonglong Tian}, \bibinfo{person}{Andrew
  Luo}, \bibinfo{person}{Xingyuan Sun}, \bibinfo{person}{Kevin Ellis},
  \bibinfo{person}{William~T. Freeman}, \bibinfo{person}{Joshua~B. Tenenbaum},
  {and} \bibinfo{person}{Jiajun Wu}.} \bibinfo{year}{2019}\natexlab{}.
\newblock \showarticletitle{Learning to Infer and Execute 3D Shape Programs}.
  In \bibinfo{booktitle}{\emph{International Conference on Learning
  Representations}}.
\newblock
\urldef\tempurl%
\url{https://openreview.net/forum?id=rylNH20qFQ}
\showURL{%
\tempurl}


\bibitem[\protect\citeauthoryear{van~den Brand, Heering, Klint, and
  Olivier}{van~den Brand et~al\mbox{.}}{2002}]%
        {DBLP:journals/toplas/BrandHKO02}
\bibfield{author}{\bibinfo{person}{Mark van~den Brand}, \bibinfo{person}{Jan
  Heering}, \bibinfo{person}{Paul Klint}, {and} \bibinfo{person}{Pieter~A.
  Olivier}.} \bibinfo{year}{2002}\natexlab{}.
\newblock \showarticletitle{Compiling language definitions: the {ASF+SDF}
  compiler}.
\newblock \bibinfo{journal}{\emph{{ACM} Trans. Program. Lang. Syst.}}
  \bibinfo{volume}{24}, \bibinfo{number}{4} (\bibinfo{year}{2002}),
  \bibinfo{pages}{334--368}.
\newblock
\urldef\tempurl%
\url{https://doi.org/10.1145/567097.567099}
\showDOI{\tempurl}


\bibitem[\protect\citeauthoryear{Visser, Benaissa, and Tolmach}{Visser
  et~al\mbox{.}}{1998}]%
        {DBLP:conf/icfp/VisserBT98}
\bibfield{author}{\bibinfo{person}{Eelco Visser},
  \bibinfo{person}{Zine{-}El{-}Abidine Benaissa}, {and}
  \bibinfo{person}{Andrew~P. Tolmach}.} \bibinfo{year}{1998}\natexlab{}.
\newblock \showarticletitle{Building Program Optimizers with Rewriting
  Strategies}. In \bibinfo{booktitle}{\emph{Proceedings of the third {ACM}
  {SIGPLAN} International Conference on Functional Programming {(ICFP} '98),
  Baltimore, Maryland, USA, September 27-29, 1998}},
  \bibfield{editor}{\bibinfo{person}{Matthias Felleisen}, \bibinfo{person}{Paul
  Hudak}, {and} \bibinfo{person}{Christian Queinnec}} (Eds.).
  \bibinfo{publisher}{{ACM}}, \bibinfo{pages}{13--26}.
\newblock
\urldef\tempurl%
\url{https://doi.org/10.1145/289423.289425}
\showDOI{\tempurl}


\bibitem[\protect\citeauthoryear{Wang, Hutchison, Leang, Howe, and Suciu}{Wang
  et~al\mbox{.}}{2020}]%
        {spores}
\bibfield{author}{\bibinfo{person}{Yisu~Remy Wang}, \bibinfo{person}{Shana
  Hutchison}, \bibinfo{person}{Jonathan Leang}, \bibinfo{person}{Bill Howe},
  {and} \bibinfo{person}{Dan Suciu}.} \bibinfo{year}{2020}\natexlab{}.
\newblock \showarticletitle{{SPORES}: Sum-Product Optimization via Relational
  Equality Saturation for Large Scale Linear Algebra}.
\newblock \bibinfo{journal}{\emph{Proceedings of the VLDB Endowment}}
  (\bibinfo{year}{2020}).
\newblock


\bibitem[\protect\citeauthoryear{Wu, Zhao, Nandi, Lipton, Tatlock, and
  Schulz}{Wu et~al\mbox{.}}{2019}]%
        {wu_siga19}
\bibfield{author}{\bibinfo{person}{Chenming Wu}, \bibinfo{person}{Haisen Zhao},
  \bibinfo{person}{Chandrakana Nandi}, \bibinfo{person}{Jeffrey~I. Lipton},
  \bibinfo{person}{Zachary Tatlock}, {and} \bibinfo{person}{Adriana Schulz}.}
  \bibinfo{year}{2019}\natexlab{}.
\newblock \showarticletitle{Carpentry Compiler}.
\newblock \bibinfo{journal}{\emph{ACM Transactions on Graphics}}
  \bibinfo{volume}{38}, \bibinfo{number}{6} (\bibinfo{year}{2019}),
  \bibinfo{pages}{Article No. 195}.
\newblock
\newblock
\shownote{presented at SIGGRAPH Asia 2019.}


\end{thebibliography}

\end{document}